\documentclass[10pt, conference, letterpaper]{IEEEtran}

\usepackage{amsmath,amssymb,amsfonts,amsthm}
\usepackage{algorithmic}
\usepackage{graphicx}
\usepackage{textcomp}

\usepackage[linesnumbered,ruled]{algorithm2e}
\usepackage{pifont}
\usepackage{gensymb}
\usepackage{booktabs}
\usepackage{capt-of}
\usepackage{thmtools}
\usepackage{thm-restate}
\usepackage{adjustbox}
\usepackage{makecell}
\usepackage{enumitem}
\usepackage{floatrow}
\usepackage{tabularx}
\usepackage[style=ieee, sorting=nty, maxbibnames=99]{biblatex}
\usepackage{flushend}

\newfloatcommand{capbtabbox}{table}[][\FBwidth]

\newtheorem{theorem}{Theorem}[section]
\newtheorem{corollary}[theorem]{Corollary}

\newtheorem{definition}[theorem]{Definition}

\newcommand{\cmark}{\ding{51}}
\newcommand{\xmark}{\ding{55}}

\newcommand{\squishlist}
{
    \begin{list}{$\bullet$}
    {
        \setlength{\itemsep}{0pt}      \setlength{\parsep}{3pt}
        \setlength{\topsep}{3pt}       \setlength{\partopsep}{0pt}
        \setlength{\leftmargin}{1.5em} \setlength{\labelwidth}{1em}
        \setlength{\labelsep}{0.5em}
    }
}

\newcommand{\squishend}
{
    \end{list}
}

\newcommand{\squishenumerate}
{
    \begin{enumerate}{}
    {
        \setlength{\itemsep}{0pt}      \setlength{\parsep}{3pt}
        \setlength{\topsep}{3pt}       \setlength{\partopsep}{0pt}
        \setlength{\leftmargin}{1.5em} \setlength{\labelwidth}{1em}
        \setlength{\labelsep}{0.5em}
    }
}
\newcommand{\squishenumerateend}
{
    \end{enumerate}
}

\newif\ifknoxrestating
\providecommand{\IfRestatedTF}[2]{\ifknoxrestating #1\else #2\fi}

\usepackage[
    hidelinks,
    breaklinks=true,
    pdftitle={Knox: Fortifying Smart Spaces With Safety Guarantees},
    pdfauthor={Rishabh Menezes, Jadon T. Schuler, Kaimeng Zhu, Oliver Rogalski, Indranil Gupta},
    pdfsubject={Distributed, Parallel, and Cluster Computing},
    pdfkeywords={IoT, smart spaces, safety verification, static analysis}
]{hyperref}

\IEEEoverridecommandlockouts

\def\BibTeX{{\rm B\kern-.05em{\sc i\kern-.025em b}\kern-.08em
    T\kern-.1667em\lower.7ex\hbox{E}\kern-.125emX}}
\begin{document}

\title{Knox: Fortifying Smart Spaces With Safety Guarantees}

\author{
    \IEEEauthorblockN{Rishabh Menezes, Jadon T. Schuler, Kaimeng Zhu, Oliver Rogalski, Indranil Gupta}
    \IEEEauthorblockA{University of Illinois at Urbana-Champaign}
    \IEEEauthorblockA{\{menezes4, jadonts2, kaimeng2, oliverr3, indy\}@illinois.edu}
    \thanks{Accepted for publication in the Proceedings of the 2026 IEEE
    International Symposium on Reliable Distributed Systems (SRDS). This is the
    authors' extended version, which additionally includes the full proofs of
    all theorems in the appendix.
    \copyright~2026 IEEE. Personal use of this material is permitted.
    Permission from IEEE must be obtained for all other uses, in any current or
    future media, including reprinting/republishing this material for
    advertising or promotional purposes, creating new collective works, for
    resale or redistribution to servers or lists, or reuse of any copyrighted
    component of this work in other works.}
}

\maketitle

\begin{abstract}
Internet of Things (IoT) devices in smart spaces and buildings are an emerging class of distributed systems with critical safety requirements. This paper presents {\it Knox}, the first system to enable safety checking in IoT-enabled smart spaces. 
Knox's contributions include (i) {\it safety specifications}: a new language for safety clauses in such smart spaces, and (ii) {\it static safety checking}: two new algorithms for static verification of multiple safety properties across multiple routines running inside a smart space. Since the latter problem is NP-hard,  we present and analyze novel and explainable algorithms for the static version of the problem. We also present optimizations that further reduce runtime. Our analysis and experimental results with real datasets show that Knox reduces checking time significantly compared to baselines, while providing high accuracy in catching safety violations.
\end{abstract}

\section{Introduction}
\label{sec:intro}

In a smart space---a home, building, or campus---Internet of Things (IoT) devices from multiple vendors cover all aspects of operation, including smart plugs, windows, doors, kitchen appliances, gardening, and cleaning. The smart home market alone
is expected to grow from \$127 B today to over \$500 B by 2030 \cite{GVR26}, %
with an expected 39 Billion IoT devices~\cite{IotA25} from 3000+ vendors~\cite{IotA24}.

Devices are coordinated by a central hub, such as Alexa~\cite{AmazonAlexa}, Google Home~\cite{GoogleHome}, or SmartThings~\cite{SamsungSmartThings}, which (i) communicates with IoT devices through the wireless network, and (ii)  launches user-programmed {\it automations}.
The most common form of automation today is  %
called a {\it routine} \cite{GoogleHome, AmazonAlexa, AppleHome, SamsungSmartThings, IFTTT}.

A routine is a (small) program containing a sequence of commands, each executing one operation on one device; thus a routine may access different devices.
Once {\it submitted} by a user, routines are held by the hub, and each is {\it activated} via one or more {\it triggers}: time, a sensor (e.g., thermostat setting crossing a threshold temperature), or the actions of another routine. Any routine may also be manually triggered at any time by the user via an app, their voice, etc. Example routines are:

{
\squishlist
\small
\item\texttt{{\bf R1:} $\langle$Trigger: time = 8pm$\rangle$ outdoorCamera = ON; door = LOCKED; porchLight = ON;}
\item\texttt{{\bf R2:} $\langle$time = 1am$\rangle$ porchLight = OFF;}
\item\texttt{{\bf R3:} $\langle$no motion for 10 mins$\rangle$ floodLight=OFF;}
\squishend
}

When a routine is triggered, the hub sequentially launches its %
commands
(IoT devices themselves do not manage routines, as device firmware is immutable). %
As routines may be triggered by time, sensors, or user input,
it is common for arbitrarily many routines to be running {\it concurrently} inside a smart space. This creates a major reliability challenge
that a smart space as a whole may enter {\it unsafe and unreliable
states}~\cite{InternetOfShit}. Maintaining reliability in a smart space requires that users have the ability to specify {\textbf{\textit{safety expectations}}}
 that  {\textbf{\textit{hold across the space at all times}}}, i.e., common concerns regardless of which routines are running. These safety expectations range
from life-critical (e.g., camera's field of view must be lit, or exhaust fan must be on whenever oven is on)
to everyday concerns (e.g., TV must not come on during an ongoing Zoom call). %
\textbf{\textit{It is unreasonable to hope that users will carefully architect each routine to satisfy their myriad safety expectations}}; safety clauses {\it must} therefore be specified separately from routines. %

Unfortunately, today's automation systems are best-effort only. They neither allow users to specify safety clauses nor detect violations. Google Home states that ``routines are for convenience only, not safety- or security-critical use cases''~\cite{GoogleSafetyCritical}, and Samsung SmartThings recommends avoiding devices %
``which could cause damage to or loss of any property'' \cite{TermsSmartThings}. Yet smart spaces {\it are} safety critical, and
incidents
continue to violate users'  reliability expectations~\cite{HomekitForum1, HomekitForum2, HomekitForum3, HomekitForum4}, e.g., a failed automation overheated a room to $40\degree$C \cite{SmartThingsForum1}.

This paper presents the theoretical foundations for {\it Knox}, {\textbf{\textit{the first system to ensure {\underline{static}} safety in smart environments}}}, allowing both specification and {\it automatic static checking} of an arbitrary number of safety clauses under an arbitrary number of routines and devices. An example safety clause in Knox is:

\noindent {\small
\texttt{S1}: \texttt{IF (outdoorCamera == ON)  \\ THEN (porchLight == ON OR floodLight == ON)}. 
}

At night, this safety clause ensures that whenever the camera is on, there is sufficient light in its field of view. Yet, since outdoorCamera, porchLight, and floodLight can each be switched on or off by other concurrent routines, the clause may be violated: %
given routines R1, R2, and R3 above,  S1 is violated at 1 am.
If {\it any} combination of a set of routines has the possibility of violating a safety clause, the system is unsafe and {\it should be detected as such when R1, R2, R3, and S1 are submitted} to the hub.
Today, there are no methods for specifying S1, nor algorithms for the hub to detect if a given set of routines could ever violate it.

In this paper we (I) propose a new grammar for specifying safety clauses in a smart space and (II) design and analyze algorithms for statically checking a set of safety clauses for a given set of routines, reporting all safety violations. Our system, Knox, is \textbf{\textit{safe}}: it does not miss any safety violations. %
Concretely, under (II), we need to solve three problems: 

\noindent 1. \textbf{{Feasibility}}: Existence of at least one space state where all safety clauses are satisfied. 

\noindent 2. \textbf{{Single Routine Safety}}: When a new routine is first written and submitted to the hub, that routine itself does not violate any safety clauses. %

\noindent 3. \textbf{{Concurrent Routine Safety}}: Given a set of submitted routines at the hub, find all subgroups of routines with at least one interleaving that violates any safety clause.

An additional requirement is {\textit{Explainability}}: for detected violations of either Single Routine Safety or Concurrent Routine Safety, we should pinpoint the exact commands that violated a given safety clause. As misunderstandings of concurrent program behavior in IoT environments are common among novice-programmer users~\cite{BugsinTAP, SmartDevicesStupid, EUDebug, iotacalc}, this feedback helps identify and fix culprit routines.

We solve the \textbf{\textit{static versions}} of these problems where all checks run %
when a new routine or a new safety clause is submitted to the hub.
This has the advantage of securing all static safety properties before runtime, so no dynamic checks of the same properties are needed. {(Dynamic issues, including failures, are beyond our current scope and are left for future work.)} %
First, the smart space's devices may be in arbitrary initial states, making safety checking (2) \& (3) %
hard.
Second, the state space is finite, but massive. In a smart space with only 30 devices, each of which can be in one of 4 states (e.g., HI, MED, LO, OFF), there
are {\textbf{\textit{over $10^{18}$ unique  states}}}! This means manual checking is untenable, and even automated brute-force solutions are too slow.
Model checkers specialized for infinite-state systems are also unsuitable \cite{DURAN2020100497}.

Feasibility (1) turns out to be easy to solve: an SMT (Satisfiability Modulo Theories)-based solver suffices.
However, for Single Routine Safety (2) and Concurrent Routine Safety (3), large state spaces make SMT-based approaches prohibitively expensive---with $R$ routines each containing $k$ instructions, there are {\textbf{\textit{$\frac{(Rk)!}{(k!)^R}$ possible interleavings}}}! Naively checking safety for each command pair of every interleaving runs SMT $(\frac{(Rk)!}{(k!)^R}\cdot Rk)$ times. Our new techniques are instead guided by the principle of {\textbf{\textit{erring on the side of safety}}}: they detect all safety violations and never miss any (zero false negatives), though false positives may occur, i.e., Knox may flag a few cases that are not actually violations (we measure this).

The contributions of this paper are:
\begin{flushleft}
\squishenumerate
\item A new expressive grammar for specifying safety clauses in IoT settings (Section \ref{sec:grammar}).
\item New fast algorithms for Single and Concurrent Routine Safety checking  %
(Section \ref{sec:Algorithms}).
\item Two new optimizations to reduce runtime (Section \ref{sec:opt}).
\item Formal proofs of correctness (Sections  \ref{sec:Algorithms} \& \ref{sec:opt}). %
\item Experiments to measure speed \& accuracy,  comparison 
vs. a baseline %
(Section \ref{sec:eval}).
\squishenumerateend
\end{flushleft}

To the best of our knowledge, Knox provides the first fully-specified formal grammar and associated safety checking algorithms for efficient static verificati
on of concurrent automations in smart space environments.

\section{System Model}

Knox makes the following system assumptions. %

{\bf Physical Components}: The smart space consists of a single centralized hub and a series of devices: %
\squishlist
\item Devices have different types of state --- including binary (on, off), multi (colors, operation modes), or numerical (light intensity, temperature).
\item Each device can receive and execute device-specific commands. E.g., for a smart plug: \{ON, OFF\}. For a sprinkler, \{ON, OFF, flow:0-100\}.
\item A centralized hub stores and initiates routines (by time, sensor value, manual, etc.) that control any present devices by issuing commands. Examples are Alexa, Google Home Hub, Samsung Smart Things hub, etc.
\item Neither the hub nor the devices fail. We will show that even without failures, which we leave to future work, safety violations pose a challenging problem.
\squishend

{\bf Logical Components}: These assumptions define command execution and organization:
\squishlist
\item Commands are device specific and may be short (e.g., light on) or long (e.g., preheat oven, open garage door, etc.). 
\item Routines contain a finite sequence of commands, without any conditionals or branches (this is consistent with routines in Alexa, Google Home, etc.).
\item A device's state may only be modified by a command. %
\item One or more submitted routines exist, held at the hub. 
\item Once triggered, a routine's commands execute sequentially. %
\item A routine may be activated (triggered) at any arbitrary time. Multiple routines may be active simultaneously (including multiple copies of the same routine).
\squishend

We define the {\it state space} as the collective mapping of all devices to their respective states. %

\section{Related Work}

 Classical human-computer interaction work indicates that smart space application design must anticipate %
 conflicting goals: smart space software should help users ``manage their lives rather than manage individual devices''~\cite{Davidoff}.
Table~\ref{qual-compare} summarizes the closest related work w.r.t. our goals: static and dynamic safety checks, an expressive safety language, and support for routines. To the best of our knowledge, Knox is the first to satisfy all four (its static checking also ensures no dynamic conflicts).

\begin{table}
    \centering\small
    \begin{tabular}{ccccccc}
        \toprule
        Auditor & Static & Dynamic & Expressive & Routines \\
        \midrule
        SafeHome \cite{Ahsan01HotEdge} & \xmark & \xmark & \cmark\ & \xmark \\
        APEX \cite{Zhou} & \xmark & \cmark & N/A & \xmark \\
        SIFT \cite{Liang} & \cmark & \xmark & \xmark & \xmark \\
        TapChecker \cite{TapChecker} & \cmark & \xmark & \cmark & \xmark \\
        Knox & \cmark & \small implied %
        & \cmark & \cmark \\
        \bottomrule
    \end{tabular}
    \caption{\small\bf Related Systems Comparison: \textnormal{We compare safety checkers by the properties they consider: static (submission time) or dynamic (run time). We also consider how expressive their rules can be and their support for routine programming.}}
    \label{qual-compare}
    \vspace{-3.5mm}
\end{table}

APEX~\cite{Zhou} targets safety in a smart space, accepting user-submitted {\it preconditions} which it  satisfies before executing a command. But APEX fails to consider contrapositives. Consider 2  preconditions:
1. (\texttt{AC ON} $\implies$ \texttt{Window CLOSED}), and
2. (\texttt{Bleach counters} $\implies$ \texttt{Window OPEN})\footnote{In APEX, the LHS is a precondition for executing the RHS. We instead normalize arrow directions here for exposition.}.
APEX treats these two rules independently. In turning on the AC, APEX  first automatically closes the windows, even if there are  bleach fumes in the kitchen!
Not only does this  \textit{allow} rule violations, it may \textit{actively create} new violations.

\begin{figure}
    \centering
    \includegraphics[width=\columnwidth]{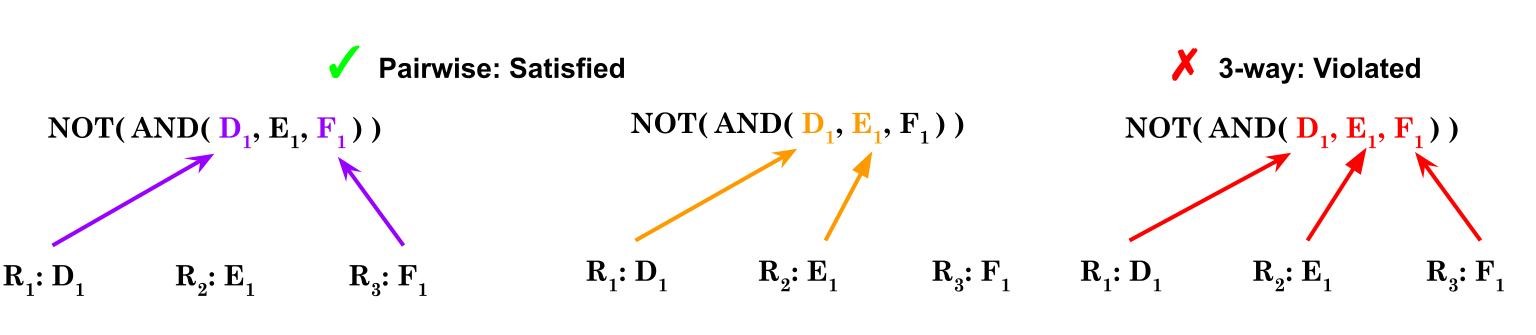}
    \caption{\small\bf Incompleteness of pairwise concurrent routine verification in TapChecker \cite{TapChecker}: \textnormal{Based on a given safety rule's semantics, some violations between 3 or more routines may never be caught. In the example, pairwise comparisons between $R_1, R_2,$ and $R_3$ will consider the overall rule satisfied, but will miss the conflict when all three run concurrently.}} %
    \medskip
    \small

    \label{fig:tapcheckerconflict}
\end{figure}

SIFT~\cite{Liang}  targets safety, but limits routines to a single command and its safety rules to conditionals, %
making it less expressive than Knox.
TapChecker~\cite{TapChecker} uses SMT verification to detect various types of conflicts in trigger-action programming (TAP) statements, with %
SMT-expressible safety rules.
As  SMT expressions only capture a TAP statement's  end result,
TapChecker
cannot analyze multi-command routines or conflicts  from concurrent interleavings, and %
it checks routines only pairwise, potentially allowing rule violations between 3 or more routines (Figure~\ref{fig:tapcheckerconflict}). Hence, while TapChecker never marks safe routines as unsafe (no false positives),
it misses many potential violations (false negatives). Knox never misses any violations but has occasional false positives. Erring on the side of safety is more practical for real deployments, as Section~\ref{sec:eval}'s experiments show.

IOTA~\cite{iotacalc} is a formal calculus for modeling concurrent devices and automations to specify IoT programs.
IOTA neither considers user-provided safety rules,
nor optimizes conflict search beyond reducing predicates on numerical device state to boolean expressions, so its safety checking suffers from  combinatorial interleaving, which Knox addresses.
SafeHome~\cite{Ahsan02Eurosys,Ahsan01HotEdge} features an ``Eventual Consistency'' scheduling algorithm for concurrent routines, %
but does not design or implement safety. SafeHome can use Knox orthogonally.

Formal verification for distributed systems has flourished~\cite{Hance, Hawblitzel}.  However, smart space users are lay users and cannot be expected to learn formal verification tools. 

\vspace{-2mm}

\section{The Safron Grammar}
\label{sec:grammar}

\begin{table}
{
    \small
    \centering
    \caption{\bf Examples of Safron Safety Rules}
    \label{table::rule-examples}
    \begin{tabularx}{\linewidth}{| >{\raggedright\arraybackslash}X | l |}
        \toprule
        {\bf Safety Rule} & {\bf Explanation} \\
        \midrule
        1. \texttt{thermostat>60 AND thermostat<80} & Regulate Temp. Range \\
        \midrule
        2. \texttt{AT MOST 2 (sprinkler1==ON, sprinkler2==ON, sprinkler3==ON)} & Insufficient Water Pressure \\
        \midrule
        3. \texttt{IF window==OPEN THEN AC==OFF} & Energy Savings\\
        \midrule
        4. \texttt{IF (backyardGrill==ON OR backyardSmoker==ON) THEN backyardWindow==CLOSED} & Prevent indoor smoke \\
        \bottomrule
    \end{tabularx}
    \vspace{-2mm}
}
\end{table}
\vspace{2mm}

\begin{figure}
{\small
\begin{verbatim}
RULE    := BINOP | IMP | ELEM
BINOP   := ELEM AND ELEM | ELEM OR ELEM
IMP     := IF ELEM THEN ELEM ELSE ELEM
           | IF ELEM THEN ELEM
SINGLE  := ANY(LIST) | ALL(LIST)
           | AT LEAST NUM (LIST)
           | AT MOST NUM (LIST)
           | EXACTLY NUM (LIST) | !ELEM | ATOM
NUM     := [1-9][0-9]*
LIST    := RULE, LIST | RULE
ELEM    := SINGLE | (RULE)
ATOM    := DeviceID.StateID OP value
OP      := == | < | > | <= | >=
\end{verbatim}
}
\vspace{-2mm}
\caption{\textbf{Safron: Grammar for \underline{Saf}ety Among \underline{Ro}uti\underline{n}es}}
\label{fig:Safron}
\end{figure}

Knox allows safety clauses to be written using our \textbf{Safron} (\underline{Saf}ety Among \underline{Ro}uti\underline{n}es) grammar, defined in Figure~\ref{fig:Safron}.
Safety rules can come prebaked with the smart space or alternatively can be added at any time by a user. Admitted rules are held at the hub.  
The grammar is quantifier-free and time-free, as rules are expected to hold at all times.

Table~\ref{table::rule-examples} shows  example rules written in Safron. %
A base (ATOM) Safron clause expresses a boolean predicate on a single device's state. Operations include standard boolean connectors such as NOT (!), AND, OR, and IF conditions, 
with ANY and ALL aggregate statements serving as syntactic sugar. $k$-constraint operators such as AT LEAST, AT MOST, and EXACTLY are common for smart spaces.  %

\noindent \emph{Tree Representation of Safety Rules}:
We auto-translate each rule into a \emph{parse tree}, first by rewriting (Table \ref{parse-modifications}) and then creating a tree (Figure \ref{fig:RuleParseTree}). %
The leaves represent atomic device states (ATOM rule) and internal nodes are logical connectors (e.g.,  AND). Edges are subexpressions with  logical connectors. %

\thisfloatsetup{subfloatrowsep=none}
\begin{figure*}
\begin{floatrow}
\ffigbox{%
    \includegraphics[scale=.35]{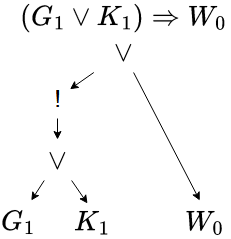}
}
{
  \captionof{figure}{\bf Parse Tree Example For Rule 4 of Table \ref{table::rule-examples}}%
  \label{fig:RuleParseTree}
}
\capbtabbox{%
    \null \small
    \centering
    \begin{tabular}{c|c} \hline
        \toprule
        \textbf {Original Clause} & \textbf{Translation} \\
        \midrule
        \tt IF A THEN B & \tt !A OR B \\
        \midrule
        \tt IF A THEN B ELSE C & \tt (!A OR B) AND (A OR C) \\
        \midrule
        \tt EXACTLY $k$ (...) & \shortstack{\tt AT MOST $k$ (...) AND \\ \tt AT LEAST $k$ (...) }\\
        \bottomrule
    \end{tabular}
    \vspace{-3.5mm}
}{
  \captionof{table}[t]{\bf Parsing Modifications}%
  \label{parse-modifications}
}
\end{floatrow}
\end{figure*}

\begin{table}%
\begin{minipage}[t]{\linewidth}{\null
\small
\centering
    \begin{tabular}{l|l}
        \toprule
        Term & Definition\\
        \midrule
        $\mathcal{S}$ & Set of feasible safety rules \\
        $\mathcal{D}$ & Set of all smart devices in the smart space \\
        $\mathcal{D}_s$ & Set of all possible device states \\
        $\mathcal{R}$ & Set of all routines in the smart space \\
        $h$ & Space state, a mapping $h: \mathcal{D} \rightarrow \mathcal{D}_s$ \\
        \bottomrule
    \end{tabular}
    \vspace{0.2cm}
    \captionof{table}[t]{\bf Key Terms and Variables}
    \label{tab:defs}
}
\end{minipage}
\end{table}

\begin{table}%
\begin{minipage}[t]{\linewidth}{\null
\small
\centering
    \begin{tabular}{l|l}
        \toprule
        {\bf Routine} & {\bf Explanation} \\
        \midrule
        \small
        \begin{tabular}{@{}l@{}}\texttt{outdoorCamera = ON;} \\ \texttt{frontDoor = LOCKED;} \\ \texttt{porchLight = ON;} \end{tabular} & \begin{tabular}{@{}l@{}}Evening Safety \\ \text{Setup} \end{tabular}\\
        \midrule
        \small
        \begin{tabular}{@{}l@{}}\texttt{thermo.mode = FROSTGUARD;} \\ \texttt{faucet.mode = DRIP;} \end{tabular} & \begin{tabular}{@{}l@{}} Cold Weather \\ Precautions \end{tabular} \\
        \bottomrule
    \end{tabular}
    \vspace{0.2cm}
    \captionof{table}[t]{\small\bf Examples of Routines %
    }
    \medskip
    \label{table::routine-examples}
}
\end{minipage}
\vspace{-0.5cm}
\end{table}

\section{Knox Algorithms}
\label{sec:Algorithms}

\newtheorem{Merging}[theorem]{Theorem}
\newtheorem{AttackDefense}[theorem]{Theorem}
\newtheorem{SafeStart}[theorem]{Theorem}
\newtheorem{IrreleventCutInUnsafeMerge}[theorem]{Theorem}
\newtheorem{MergingAccurate}[theorem]{Theorem}
\newtheorem{TriggersTheorem}[theorem]{Theorem}

\subsection{Approach}
\label{sec:algoapproach}

We outline Knox's three key contributions below and detail them in the following sections. Table \ref{tab:defs} shows variable definitions.

\noindent \textbf{Feasibility}:
{\it Given a set of safety rules, denoted as $\mathcal{S}$, is there at least one space state $h: \mathcal{D} \rightarrow \mathcal{D}_s$ that satisfies every rule in $\mathcal{S}$?} If rules are infeasible, no routine could ever be safe! Solving feasibility is equivalent to expressing safety rules as an SMT problem, so Knox uses an SMT solver, cvc5 \cite{DBLP:conf/tacas/BarbosaBBKLMMMN22}. The overhead is small since feasibility is only needs to be verified once for a given rule set, independent of routines, which change more frequently than rules. Hence we do not discuss Feasibility further.

\noindent \textbf{Single Routine Safety}: %
{\it Does a single routine violate any given rule in $\mathcal{S}$?} Table \ref{table::routine-examples} provides routine examples, from which we introduce the {\it cut}, a point between execution of consecutive routine commands. Cuts use rule semantics to make a per-rule worst-case assumption of any relevant unset states, avoiding an exhaustive search of all unset state combinations. We check safety at cuts and derive a causal relation that extends per-cut analyses.

\noindent \textbf{Concurrent Routine Safety}: {\it Are there any interleavings among a given set of routines $\mathcal{R}$ which would violate at least one safety rule in $\mathcal{S}$?} We assume $\mathcal{S,R}$ %
satisfy both Feasibility and Single Routine Safety.
We adapt the Single Routine Safety solution via {\it multi-routine cuts}: points in time between execution of consecutive commands that may be from different routines, while maintaining sequentiality within each routine.

\noindent \textbf{Completeness and Accuracy}: Knox is complete and detects any existing safety violations, but may be inaccurate, reporting false positives in certain conditions. Knox's safety assurances hold {\it no matter the initial space state}.

\subsection{NP-Hardness} 
\label{subsec:nphardnessproof}

\begin{restatable}[NP-Hardness: Single]{theorem}{NPHardSingle}
\label{thm::SATReduction}
The Single Routine Safety problem is NP-hard. \IfRestatedTF{}{{\small (Proof in Appendix)}}
\end{restatable}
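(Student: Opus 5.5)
We prove the statement by a polynomial-time reduction from SAT (CNF-SAT, or 3-SAT), using the fact that Knox's safety guarantee must hold \emph{no matter the initial space state}. The idea is that a single routine is unsafe exactly when some starting state, together with the states the routine produces, violates a rule. If the routine touches no devices relevant to a rule, then deciding whether that rule can be violated means deciding whether its negation is satisfiable over arbitrary device states. That is a satisfiability question in disguise.

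Concretely, take a CNF formula $\varphi = C_1 \wedge \dots \wedge C_m$ over variables $x_1,\dots,x_n$. For each variable $x_i$ create a binary device $d_i$ with states $\{\mathrm{ON},\mathrm{OFF}\}$, and read $x_i$ as the atom \texttt{$d_i$ == ON}. Also create one extra device $e$ with states $\{\mathrm{ON},\mathrm{OFF}\}$. Now translate $\neg\varphi$, which is a disjunction of conjunctions of literals, into a single Safron rule. Each clause $C_j$ becomes an \texttt{ANY(...)} of its literals, where a negated literal is written with \texttt{!}. The whole rule is $\neg\varphi$, written as \texttt{ANY(!(ANY(...$C_1$...)), ..., !(ANY(...$C_m$...)))}. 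This is legal in the grammar of Figure~\ref{fig:Safron}, since \texttt{!ELEM} and parenthesised \texttt{(RULE)} are allowed, and its size is linear in $|\varphi|$. Let $\mathcal{S}$ contain just this rule. Let the routine be the single command \texttt{$e$ = ON}, which touches no $d_i$.

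Next we argue correctness. The routine violates the rule at some point of execution (at the cut before or after its command) under some initial state if and only if some assignment to $d_1,\dots,d_n$ falsifies $\neg\varphi$, that is, satisfies $\varphi$. This is because the $d_i$ are never modified and may start in any combination. So $\varphi$ is satisfiable iff the routine is \emph{not} single-routine safe.

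One subtlety needs care. Concurrent Routine Safety assumes Feasibility, and that assumption might seem to apply to Single Routine Safety too. Under it we may only use rule sets that have some satisfying state. Our rule $\neg\varphi$ is feasible unless $\varphi$ is a tautology. Tautology is trivial to rule out: add one fresh variable and one clause so that $\varphi$ cannot be a tautology, for example conjoin $\varphi$ with a clause over a fresh variable $y$. Then both satisfiability and feasibility hold in the right direction. We will also check that a conservative definition, such as "the routine violates the rule from some initial state," matches the paper's intended semantics of Single Routine Safety; this alignment of definitions is the main point needing care.

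Since the question "is the routine safe" is the complement of SAT, deciding it is coNP-hard, and hence NP-hard under Turing (Cook) reductions. Alternatively, we can phrase the problem as "does the routine violate some rule," which is NP-hard directly via the many-one reduction above. We plan to state the result in the latter form to keep the reduction many-one.
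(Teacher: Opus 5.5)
Your reduction is essentially the paper's. Both reduce from SAT by writing $\neg B$ as a rule over devices the routine never writes, so that with an arbitrary initial state the routine is unsafe iff $B$ is satisfiable, and both then flip the answer. The one real difference is that the paper couples the routine to the rule via $S = a \wedge \neg(B)$, with the routine setting $a$ true, so the routine genuinely affects the rule (cf.\ the skip of unaffected rules in Algorithm~\ref{alg::single}). Your auxiliary device $e$ lies outside the rule, so your argument depends on the reading you flagged, namely that a rule the routine never touches still counts. In exchange, your version does not need the paper's tacit exclusion of the pre-command cut, where $a$ may start false and would make every instance unsafe.
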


The problem remains NP-hard even if the search starts from states that satisfy all safety rules, or \emph{safe states}.

\begin{restatable}[NP-Hardness: Safe Initial State]{theorem}{NPHardSafeInitialThm}
\label{thm::SATReduction2}
The Single Routine Safety problem, when limited to safe initial states, is NP-hard.\IfRestatedTF{}{ {\small (Proof in Appendix)}}
\end{restatable}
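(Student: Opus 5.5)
We reduce from SAT (or 3-SAT), following the template of Theorem~\ref{thm::SATReduction} but arranging that the initial state is safe. Given a CNF formula $\varphi$ over variables $x_1,\dots,x_n$, we create one binary device $d_i$ per variable, with $d_i==\mathrm{ON}$ encoding $x_i=\mathrm{true}$, and one extra binary ``guard'' device $g$. We write a single safety rule
\[
\texttt{IF } (g==\mathrm{ON}) \texttt{ THEN } \lnot\varphi',
\]
where $\varphi'$ is $\varphi$ transcribed into Safron. Each clause becomes an \texttt{OR} of atoms and negated atoms, and the clauses are joined with \texttt{ALL}/\texttt{AND}. The routine is the single command \texttt{g = ON}.

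The argument has three steps. First, the construction has polynomial size: the rule has length linear in $|\varphi|$ and the routine has one command. Second, we exhibit a safe initial state. Any state with $g==\mathrm{OFF}$ satisfies the implication vacuously, so safe initial states exist regardless of $\varphi$. The restricted problem is therefore a genuine instance, and rule feasibility holds trivially. Third, we prove the equivalence. Since the problem asks for safety no matter the initial state, here ranging over all safe initial states, the routine is unsafe iff some safe initial state $h$ leads to a violation at the cut after \texttt{g = ON}. That cut is the only one that can be unsafe, because $h$ itself is safe. After the command, the devices $d_i$ keep their values from $h$ and $g==\mathrm{ON}$, so the rule is violated iff $h$ restricted to the $d_i$ satisfies $\varphi$. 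Conversely, any satisfying assignment of $\varphi$, extended with $g==\mathrm{OFF}$, is a safe initial state that leads to a violation. Hence the routine is unsafe iff $\varphi$ is satisfiable.

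This shows that deciding Single Routine Safety on safe initial states is coNP-hard as a decision question, and NP-hard in the sense that its complement is SAT. This matches how Theorem~\ref{thm::SATReduction} is presumably phrased. The main obstacle is making sure the encoding respects the model's conventions. One convention is that Safron's \texttt{IF/THEN} is rewritten to \texttt{!A OR B} (Table~\ref{parse-modifications}), and negating a CNF needs \texttt{!} applied to a parenthesized \texttt{RULE}. The grammar allows this through \texttt{!ELEM} with \texttt{ELEM := (RULE)}. The other convention is the paper's assumption that rule sets are feasible: here the rule is feasible whether or not $\varphi$ is satisfiable. If the paper instead frames the problem as ``find a violating execution'', the same reduction gives NP-hardness directly. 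If it additionally requires the routine to touch the devices in the rule, we would add no-op reassignments, and these must not change the cut analysis.
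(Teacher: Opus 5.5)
Your reduction is correct and is essentially the paper's. The paper uses the rule $a \lor \lnot B$, where $a$ is the atom ``$D$ in state $D_s$,'' together with a one-command routine that moves $D$ to $D_s' \neq D_s$; this is your $\lnot(g==\mathrm{ON}) \lor \lnot\varphi'$ with \texttt{g = ON}, with the answer complemented in the same way (so your coNP-hardness remark applies equally to the paper's argument), and your fresh guard device and grammar/feasibility checks simply make explicit what the paper leaves implicit.
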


\begin{theorem}[NP-Hardness: Concurrent] The Concurrent Routine Safety problem is NP-hard.
\label{thm::SATReduction3}
\end{theorem}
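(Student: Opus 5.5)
The plan is to reduce from the Single Routine Safety problem, which Theorem~\ref{thm::SATReduction} already shows is NP-hard, rather than from SAT directly. Concurrent Routine Safety asks whether some subgroup of the submitted routines admits an interleaving that violates a rule in $\mathcal{S}$. A single routine is a legitimate subgroup, and its only interleaving is its own sequential order. So any instance $(\mathcal{S}, R)$ of Single Routine Safety maps to the Concurrent instance $(\mathcal{S}, \{R\})$. The two instances have identical answers, and the map is computable in linear time.

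The paper states that for Concurrent Routine Safety it assumes $\mathcal{S}$ and $\mathcal{R}$ already satisfy Feasibility and Single Routine Safety. Under that promise the trivial reduction above is not available. In that case I would return to the SAT construction behind Theorem~\ref{thm::SATReduction} (or Theorem~\ref{thm::SATReduction2}) and split the hard routine into pieces that are each individually safe.

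\textbf{Construction.} Given a CNF formula $\varphi$ over $x_1,\dots,x_n$:
\begin{enumerate}
\item Create a boolean device $d_i$ for each variable $x_i$, and one extra ``flag'' device $f$.
\item Encode $\varphi$ in a Safron rule using AND/OR/! over atoms $d_i==\mathrm{ON}$. Let the single safety rule be \texttt{IF f==ON THEN !(}$\varphi$\texttt{)}.
\item Let $R_1$ be the routine \texttt{f = OFF; f = ON;}.
\end{enumerate}

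The devices $d_i$ start in arbitrary initial states, which play the role of a truth assignment. If that fails, I would set each $d_i$ explicitly with small routines $\texttt{d}_i\texttt{ = ON}$ and $\texttt{d}_i\texttt{ = OFF}$ and rely on interleavings to choose the values. Safety is quantified over all initial states and all interleavings, so the system is unsafe exactly when some state reachable with $f==\mathrm{ON}$ satisfies $\varphi$, that is, when $\varphi$ is satisfiable. Hence the existence of a violating interleaving is equivalent to SAT.

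The main obstacle is reconciling this with the promise that each routine is individually safe. $R_1$ alone may violate the rule if $\varphi$ is satisfied by the initial state. I would handle this in one of two ways:
\begin{itemize}
\item Restrict to safe initial states, as in Theorem~\ref{thm::SATReduction2}.
\item Add a guard device $g$ to the rule, \texttt{IF (f==ON AND g==ON) THEN !(}$\varphi$\texttt{)}. Let $R_1$ set only $f$ and a second routine $R_2$ set only $g$ (each first turning the other's precondition off where needed). Then neither routine alone can make both guards true from a safe start, but an interleaving of the two can.
\end{itemize}
Verifying that each routine is individually safe for every safe start, while the combined system is unsafe iff $\varphi$ is satisfiable, is the delicate step. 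Feasibility is immediate, since the state with $f$ and $g$ OFF satisfies the rule. The reduction is polynomial in $|\varphi|$, which gives NP-hardness.
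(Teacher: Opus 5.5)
Your first paragraph is exactly the paper's proof. The paper's argument is only the remark that Single Routine Safety is the one-routine special case, which is your map $(\mathcal{S},R)\mapsto(\mathcal{S},\{R\})$.

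What you add goes beyond the paper and is worth keeping. The paper does pose Concurrent Routine Safety under the promise that the inputs already satisfy Feasibility and Single Routine Safety (Algorithm~\ref{alg::core} feeds only SafeRoutines to the concurrent check). Its sketch ignores this, since $(\mathcal{S},\{R\})$ meets the promise only when $R$ is already safe. Your guard construction repairs this, and the ``delicate step'' is short. Take the rule \texttt{IF (f==ON AND g==ON) THEN !(}$\varphi$\texttt{)} with $R_1=(g{=}\mathrm{OFF};\ f{=}\mathrm{ON})$ and $R_2=(f{=}\mathrm{OFF};\ g{=}\mathrm{ON})$.
\begin{itemize}
\item Each routine run alone, even in several copies, keeps one guard OFF after every command. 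So every state it produces is safe, whatever the start.
\item The interleaving $g{=}\mathrm{OFF}, f{=}\mathrm{OFF}, f{=}\mathrm{ON}, g{=}\mathrm{ON}$, started from the safe state with $f$ OFF and the $d_i$ encoding a satisfying assignment, violates the rule.
\item If $\varphi$ is unsatisfiable, the rule holds in every state, and Feasibility holds via $f$ OFF.
\end{itemize}
This proves hardness of the promise version, which is stronger than what the paper's one-liner establishes.

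Drop your other two fixes. First, restricting to safe initial states does not restore the promise for the unguarded rule. $R_1=(f{=}\mathrm{OFF};\ f{=}\mathrm{ON})$, started with $f$ OFF and the $d_i$ satisfying $\varphi$, is individually unsafe; this is exactly the situation of Theorem~\ref{thm::SATReduction2}. Second, the fallback routines $d_i{=}\mathrm{ON}$ and $d_i{=}\mathrm{OFF}$ are generally individually unsafe too. From a safe start with both guards ON and $\varphi$ false, a single flip can make $\varphi$ true. They are also unnecessary, because the initial values of the $d_i$ already supply the assignment. Finally, reword ``$R_1$ sets only $f$'', which contradicts your parenthetical that each routine first turns the other guard off.
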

\noindent \emph{Proof Sketch}: Since Single Routine Safety is a special case of Concurrent Routine Safety, this is also NP-hard. 
 
As perfect solutions to these problems are not tractable, Knox's approach makes use of heuristics and potentially allows false positives---a rule-satisfying \textit{safe} routine could be marked as rule-violating, or \textit{unsafe}.

\subsection{Stateful Trees} \label{subsec::StatefulTree}

Given one or more routines, analyzing a given tree (which represents one safety rule) requires capturing possible {\it state changes} caused by the routines. To do so, we instantiate each rule's parse tree $T$ as a \emph{stateful tree}:

\begin{definition}[Stateful Tree]
A stateful tree $ST$ is a map from each node in a rule parse tree $T$ to a truth value, either \emph{true} or \emph{false}: $ST: (n \in T) \rightarrow \{\textit{true}, \textit{false}\} $.
\end{definition}

Each node's state indicates whether that subclause (in the subtree rooted at that node) is considered satisfied (\textit{true}) or not (\textit{false}). Leaves are set to \textit{true} if the device state is correct, e.g., Rule 4 in Table~\ref{table::rule-examples} will have  {\it window} leaf ($W_0$ in Figure~\ref{fig:RuleParseTree}) set to \textit{true} if the window is indeed closed. For Rule 2, each of the three leaves is \textit{true} if its respective sprinkler is ON. Once leaves are set, we propagate values up to the root, following internal nodes' operations.

\begin{definition}[Rule Satisfaction/Violation]\label{def:satisfaction}
If a stateful tree $ST$ maps its root to \textit{true}, then the rule is satisfied, else violated.
\end{definition}

Determining Rule Satisfaction or Violation takes O($N_T$) time, where $N_T$ = number of nodes in $T$. %

\subsection{Representing Device State Through Cuts} \label{subsec::Cut}

Cuts (Section~\ref{sec:algoapproach}) capture the space state as it pertains to a single rule between consecutive commands, either from a single routine or interleaved across multiple routines which allows us examine safe and unsafe space states without searching all interleavings.
As we assume arbitrary initial device states, a device's state in a cut (a leaf in a rule parse tree) can take on one of many values:

\begin{definition}[Cut State] \label{def::cutState}
Each leaf $l$ in a rule parse tree is set to one of the following \textit{cut states} within a cut:
\squishlist
\item \textbf{Unset}: No routine has modified $l$'s truth value
\item \textbf{Conflict}: Two or more routines have set \textit{differing} values
\item \textbf{True}: A routine set $l$ to \emph{true}, and no others set $l$ to \emph{false}
\item \textbf{False}: A routine set $l$ to \emph{false}, and no others set $l$ to \emph{true}
\squishend
\end{definition}

Now we can formally define:

\begin{definition}[Single Routine Cut]
\label{def:SingleRoutineCut}
Given a rule parse tree $T$ and routine $R_i$, a \emph{single routine cut} contains:
\squishlist
\item \textbf{Location}: Similar to a ``Program Counter'', an integer $i \geq 0, \leq |R_i|$ representing how many commands of $R_i$ have completed executing, i.e., a point \emph{between} two commands.
\item \textbf{Cut State List}: A recording of the most recent \textit{cut state} for each leaf in $T$. Note that \emph{conflict} is not possible for a single routine.
\squishend
\end{definition}

\noindent {\underline{Example}}: Consider Table~\ref{table::rule-examples}'s Rule 1:
\noindent
\texttt{thermostat > 60 AND thermostat < 80}, with Routine \texttt{thermostat = 72}. The two leaves ($>60$ and $<80$) are both \textit{unset} %
at location 0; at location 1, a temperature of $72$ satisfies both conditions, so both map to \textit{true}.
Next we handle multiple interleaving routines:

\begin{table}
{\small
    \centering
    \caption{\bf Merge Cut State Operation}
    \label{table::cut-state-merge}
    \begin{tabular}{cc|c}
        \toprule
        Operand 1 & Operand 2 & Result \\
        \midrule
        Any cut state $x$ & $x$ & $x$ \\
        Any cut state $x$ & \textit{unset} & $x$ \\
        Any cut state $x$ & \textit{conflict} & \textit{conflict} \\
        \textit{true} & \textit{false} & \textit{conflict} \\
        \bottomrule
    \end{tabular}
    }
    \vspace{-3.5mm}
\end{table}

\begin{definition}[Multi-Routine Cut]
Given a rule parse tree $T$ and set of single routine cuts $C$ across routine set $R_C$, the \emph{multi-routine cut} is the result of \textit{merging} all cuts in $C$ to form a single merged cut. It contains: 
\squishlist
\item \textbf{Location Vector}: A vector of the {\it locations} $l_i$ within each routine $R_i$ in $R_C$: $<l_1,...,l_i,...,l_{|R_C|}>$.
\item \textbf{Merged Cut State List}: Most recent \textit{merged} \textit{cut states} for each leaf in $T$. Merging is done via the rules in Table \ref{table::cut-state-merge} and is both commutative and associative. 
\squishend
\end{definition}

\noindent {\underline{Example:}} Consider Rule 1 again, with two concurrent routines,
Routine 1: \texttt{thermostat = 72}, and
Routine 2: \texttt{thermostat = 68}, and the merged cut at location $1$ of each. The ultimate thermostat setting may be unknown (since  routines are concurrent), but this is unnecessary for safety checking---both single routine cuts set their leaves ($>60$ and $<80$) to \textit{true}, so the merged state is also true. Had Routine 2 instead been \texttt{thermostat = 90} (violating the rule), the merged state would be a conflict.

\noindent \emph{Ordering Cuts}:
\label{sec:causalcuts}
We define a {\it causal} relationship among cuts, used in later analysis. It captures whether one cut follows another and applies transitively: a cut $C$ is a causal predecessor of $C'$ if, starting from $C$, one or more commands are executed to reach $C'$. Formally:

\begin{definition}[Predecessor \& Successor Cuts] \label{def:pred}
Let  cuts $C, C'$ be defined as $C = (I, M), C' = (I', M')$, where $I$ and $I'$ are  location vectors $\{l_1, ... , l_n\}$ and $\{l'_1, ..., l'_n\}$, respectively, and $M$ and $M'$ are the respective merged cut state lists. Then $C'$ {\underline{precedes}} $C$ iff for every $i \in \{1...n\}$, $l'_i \leq l_i$ and $\exists j \in \{1...n\}$ such that $l'_j < l_j$. We also say that $C$ {\underline{succeeds}} $C'$.
\end{definition}

\begin{figure}
    \centering
    \includegraphics[width=\columnwidth]{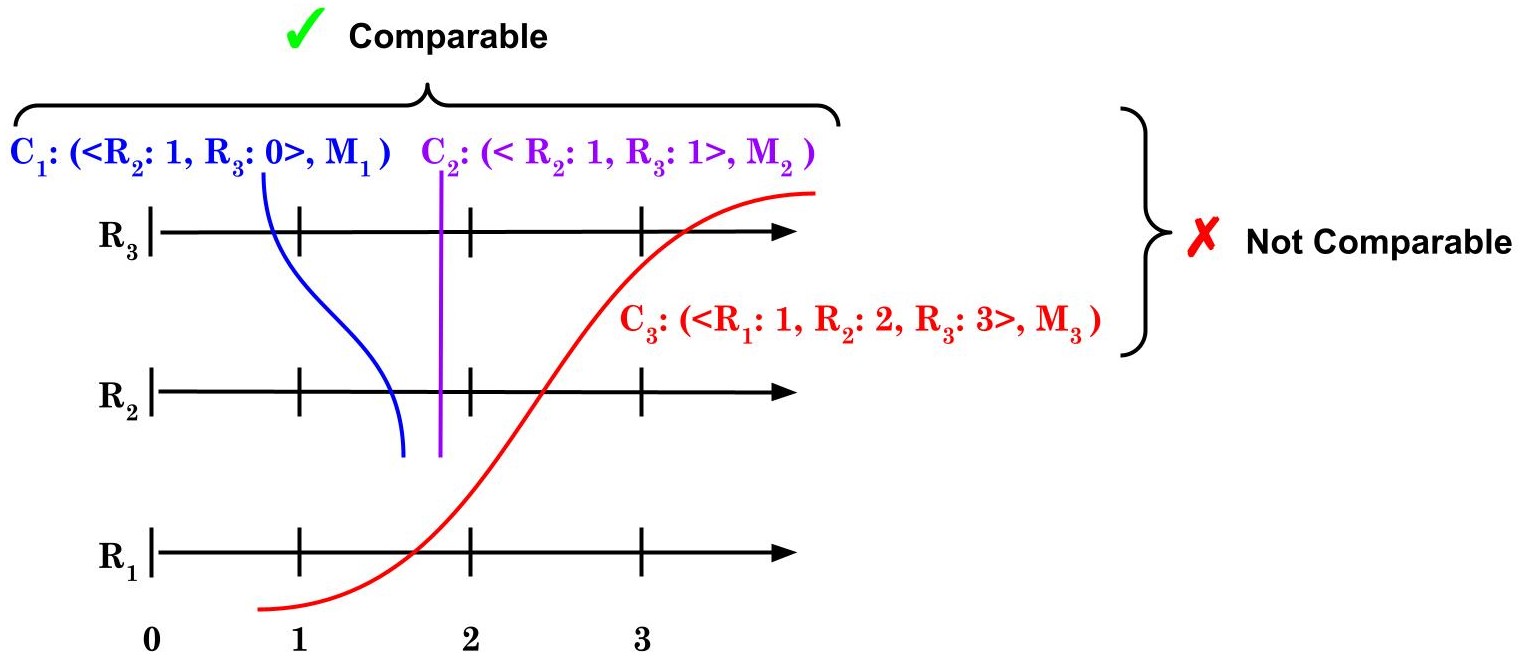}
    \caption{\small\bf Cuts and Cut Comparisons: \textnormal{
        Each routine can execute commands independently of others. Cuts can be compared if they share the same routine set --- $C_1$ and $C_2$ cover $R_1$ and $R_2$ so they are comparable, but $C_3$ includes state from $R_3$, so it cannot be compared to either.
    }} %
    \medskip
    \small

    \label{fig:CutTimelines}
\end{figure}

\begin{definition}[Immediate Predecessor and Immediate Successor Cuts] \label{def:immPred}
Let  cut $C$ have predecessor $P$. $P$ is an immediate predecessor of $C$ if and only if $l_i' = l_i$ for every $i \neq j \in \{1...n\}$ and $l'_j = l_j - 1$ for some $j$. Hence, we also call $C$ an immediate successor of $P$.
\end{definition}

\noindent This notion has similarities to vector timestamps: Figure~\ref{fig:CutTimelines} shows $C_1$ preceding $C_2$, yet both are incomparable to $C_3$.

\subsection{Node Contribution} \label{subsec::Contribution}

When a leaf's cut state is \emph{unset} or \emph{conflict} (Section~\ref{subsec::Cut}), its ultimate truth value---\textit{true} or \textit{false}---is still to be determined. Rather than enumerating all combinations across leaves, Knox adopts a heuristic that errs on the side of safety. We choose each leaf's {\it worst-case} truth value, i.e., the one that brings the stateful representation of the root ``closer'' to false.

To calculate this value for each node in a  tree, we introduce the notion of that node's \textit{contribution}. %
A node's setting \textit{contributes} to exactly one of rule satisfaction or violation. A child set to \textit{true} contributes to satisfaction of AND, OR, and AT LEAST subtrees. A child set to \textit{false} contributes to satisfaction of NOT (`!' in the grammar) and AT MOST subtrees.

For instance, in a disjunction $A \lor B \lor C$, setting more terms (leaves) to false brings the overall statement closer to false.
As rules grow larger or use more complex {Safron} grammar elements (Section~\ref{sec:grammar}), a leaf setting's effect on overall safety is no longer clear at a glance. Hence we define, for root and descendant:

\begin{definition}[Root Contribution]
\label{defn::RootContribution}
Let $r$ be the root of a tree $T$ with stateful tree mapping $ST$. If $ST(r) = \textit{true}$, then we say $r$ \textit{contributes} to rule satisfaction. If $ST(r) = \textit{false}$, then we say $r$ \textit{contributes} to rule violation.
\end{definition}

\begin{definition}[Descendant Contribution]
\label{defn::Contribution}
Let $A, D_1, ...D_N$ be nodes of a tree $T$ with stateful tree mapping $ST$, and let $\{D_1...D_n\}$ be descendants of node $A$. Let $A$ have initial truth value $s_A$. For arbitrary $i \in \{1...n\}$, there is a combination of truth values $\{s_{D_1}, s_{D_2}, ...s_{D_n}\} - \{s_{D_i}\}$  such that when $D_i$ is set to value $s_{D_i}$, $A$'s value is changed to $s_A'$, defined as $s_A' = \neg s_A$.
$s_A'$ contributing to rule satisfaction for the subtree rooted at $A$ is necessary and sufficient to state that $s_{D_i}$ contributes to rule satisfaction for $A$. The same holds for rule violation.

\end{definition}

For contribution to be usable in analyzing state changes on a tree, it must be deterministic, meaning that a state assignment should have \emph{one} outcome with respect to rule satisfaction or violation. The following theorem addresses this concern:

\begin{restatable}[One-To-One Contribution]{theorem}{onetoonecontribution}
\label{thm::Contribution}
Given a tree $T$ for a rule, and a node $n \in ST$, a specific truth value of node $n$ can contribute only to {\it exactly one} of either rule satisfaction or rule violation.
\IfRestatedTF{}{{\small (Proof in Appendix)}} 
\end{restatable}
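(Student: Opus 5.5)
The plan is to prove the statement by structural induction on the depth of node $n$ in $T$, measured from the root $r$. The underlying claim is that every connector in the rewritten parse tree (after applying Table~\ref{parse-modifications}) is \emph{monotone} in each child. That is, each child is either non-decreasing (AND, OR, AT LEAST) or non-increasing (NOT, AT MOST) in the Boolean order $\textit{false} < \textit{true}$. Monotonicity lets us assign each node $n$ a well-defined \emph{polarity} $\pi(n) \in \{+,-\}$, equal to the parity of the number of NOT and AT MOST ancestors on the unique root-to-$n$ path. We would then show that the value \textit{true} of $n$ contributes to satisfaction exactly when $\pi(n)=+$, and \textit{false} contributes to satisfaction exactly when $\pi(n)=-$. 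The other value contributes to violation. Since $\pi(n)$ is a single value, each truth value of $n$ lands in exactly one of the two classes.

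The steps in order are as follows.
\begin{enumerate}
\item Base case: the root. Definition~\ref{defn::RootContribution} directly makes \textit{true} contribute to satisfaction and \textit{false} to violation, and these are mutually exclusive because $ST(r)$ is a single value.
\item Per-operator monotonicity. Fix a child $D$ of a node $A$ and fix the other children. Flipping $D$ either leaves $A$ unchanged or changes $A$. If $A$ changes, the direction of that change is determined by the operator alone, not by the sibling values.
  \begin{itemize}
  \item For AND, OR, and AT LEAST $k$: raising a child to \textit{true} can only raise $A$, since it increases the count of true children.
  \item For AT MOST $k$ and NOT: setting the child to \textit{true} can only lower $A$.
  \end{itemize}
  Hence, in Definition~\ref{defn::Contribution}, the value $s_{D}$ that flips $A$ produces an $s_A'$ that is fixed as a function of $s_D$ and the operator.
\item Inductive step. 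Monotone maps compose, with the sign multiplying along the path. So for an arbitrary descendant $D_i$ of $A$, any flip of $A$ induced by $D_i$ goes in the direction $\pi$-determined by the path from $A$ to $D_i$. By the inductive hypothesis, $s_A'$ contributes to exactly one of satisfaction or violation, so $s_{D_i}$ inherits exactly one.
\end{enumerate}

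The main obstacle is that Definition~\ref{defn::Contribution} is existential: it asks only for \emph{some} combination of the other values that flips $A$. We must rule out two different sibling combinations flipping $A$ in opposite directions. Monotonicity handles this, but it depends on the tree being a genuine tree in which each node has a single parent and a single root path. A device appearing in several leaves is harmless, because distinct leaves are distinct nodes. The rewriting of IF/ELSE and EXACTLY matters here. The raw EXACTLY and IF--THEN--ELSE connectors are non-monotone; for example, under EXACTLY $k$ a child can push the result either way depending on its siblings. The rewriting replaces them with monotone ones, duplicating the shared subformula $A$ into separate nodes. We would therefore state explicitly that the theorem applies to the rewritten tree, and verify this one case carefully.

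A secondary technicality is the degenerate case where no combination flips $A$. Then $s_{D_i}$ contributes to nothing, which would contradict "exactly one." To handle it, we would define contribution in that case via polarity, or note that every child of a non-constant operator is relevant for some sibling assignment.
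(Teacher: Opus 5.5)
Your proposal is correct and follows essentially the same route as the paper: a top-down induction from the root, where the key step is a per-connective check (NOT, OR, AND, AT MOST, AT LEAST) that a given value of a child can move its parent in only one direction. Your explicit polarity and composition-along-the-path bookkeeping (which matches Algorithm~\ref{alg:determineAtkDef}), your restriction to the rewritten tree, and your caveat about constant nodes that nothing can flip are refinements the paper's proof leaves implicit, not a different argument.
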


For example, in Table~\ref{table::rule-examples}'s Rule 4
(tree in Figure~\ref{fig:RuleParseTree}), turning off the backyard grill contributes to rule satisfaction. If the smoker is also off, the parent OR is set to false, inductively bringing the root closer to true. The same action can \emph{never} make the parent OR true, so it \emph{only} contributes to rule satisfaction.

\subsection{Attack and Defense States}
\label{sec:attackdefense}

Given a safety rule (tree), a node state that brings the system closer to violating it is called an {\it attack}  state. One that takes it away from violation is called a {\it defense} state. Formally:

\begin{definition}[Attack and Defense States]\label{att/defNameDef}
Given a rule parse tree $T$ and a stateful tree $ST$, if a node $n \in T$'s truth value $ST(n)$ \textit{contributes} to $ST$'s violation, we say $n$ is in its \textit{attack state} $A$.
In contrast, if a node $n \in T$'s value $ST(n)$ \textit{contributes} to $ST$'s satisfaction, we say $n$ is in its \textit{defense state} $D$.
\end{definition}

By Theorem \ref{thm::Contribution}, each possible truth value of a node is either an attack or a defense state. Hence every node \textit{must} have both, depending on the tree structure.

Now we can present Algorithm~\ref{alg:determineAtkDef}  to calculate  attack and defense states. %
Analyzing a rule parse tree $T$ takes $O(N_T)$ time, where $N_T$ is the number of nodes in $T$. For all rules, this takes $O(N_\mathcal{S})$ time, where $N_\mathcal{S} = \sum_{T\in\mathcal{S}} N_T$ is the total number of nodes across all trees.

\begin{algorithm}
\caption{\bf Determining attack/defense states}
\label{alg:determineAtkDef}
{\small
\KwIn{Rule parse tree $T$} %
\KwOut{Modified rule parse tree with att/def states on nodes}
\ForEach{node $n$ in a pre-order traversal of $T$}{
    \uIf{$n$ is the root}{
        \{$n$.attackState, $n$.defenseState\} $\gets$ \{\textit{false}, \textit{true}\}\;
    }
    \uElseIf{$n$'s parent's operation is AND, OR, or AT\_LEAST}{
        \{$n$.attackState, $n$.defenseState\} $\gets$
        \{$n$.parent.attackState, $n$.parent.defenseState\}
    }
    \uElse{
        \{$n$.attackState, $n$.defenseState\} $\gets$
        \{$n$.parent.defenseState, $n$.parent.attackState\}\;
    }
}
}
\end{algorithm}

\noindent \textbf{Choosing Unset/Conflict States}: When a leaf's \textit{cut state} (Section~\ref{subsec::Cut}) is either \textit{unset} or \textit{conflict}, we err on the side of caution, setting each of these leaves to their {\it attack} states (contributing to rule violation). %
A corner case occurs when multiple leaves within a rule refer to the same device, which may produce contradictory states.
Ex.: for the degenerate rule \texttt{AC == ON OR AC == OFF} with unknown AC state, we assume the worst case for \textit{both} leaves: leaf \texttt{AC == ON} assumes the AC is off, and leaf \texttt{AC == OFF} assumes it is on.
This is naturally one reason why Knox may experience false positives, though recurrence of a device in multiple clauses is rare in simple safety rules. %
(This degenerate rule is vacuously true, so it would not be admitted to the system.) %

\subsection{Determining Cut Safety}
\label{subsec:cut-context}

We now leverage cuts (Section~\ref{subsec::Cut}) to check for rule violations without having to enumerate all possible interleavings. The \textit{Cut Contextualizer} algorithm (Algorithm~\ref{alg:cut-context}) starts from a cut and constructs a stateful tree. It then makes worst-case decisions for leaves set to \textit{unset} or \textit{conflict}, while retaining settings of \textit{true} or \textit{false}.

\begin{algorithm}
\caption{\bf Cut Contextualizer}
\label{alg:cut-context}
{\small
\KwIn{Rule parse tree $T$, Cut $C$}
\KwOut{Stateful tree $ST$}
\ForEach{leaf $l$ $\in$ $T$}{
    $ST(l)$ $\gets$ $C(l)$ if $C(l)$ is \textit{true} or \textit{false}\;
    $ST(l)$ $\gets$ $l$.attackState otherwise\;
}
Evaluate $ST$ (Section~\ref{subsec::StatefulTree})\;
return $ST$\;
}
\end{algorithm}

\begin{theorem}\label{def:cut-safe}
A cut $C$ is \textit{safe} at its location vector if the stateful tree $ST$ via the \textit{Cut Contextualizer} is satisfied.
\end{theorem}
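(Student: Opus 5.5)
We read the statement as a soundness claim: if the stateful tree $ST$ returned by the Cut Contextualizer (Algorithm~\ref{alg:cut-context}) on cut $C$ has a \emph{true} root, then every concrete space state $h$ consistent with $C$ satisfies the rule. Here ``consistent'' means that every leaf with cut state \emph{true} or \emph{false} has that truth value under $h$, while leaves that are \emph{unset} or \emph{conflict} may take any value, reflecting an arbitrary initial state or an arbitrary interleaving order. The proof compares the concrete stateful tree $ST_h$ with $ST$ and argues by structural induction on $T$, from the leaves upward.

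The key lemma is a monotonicity invariant. Orient every node $n$ so that its defense state is ``good'' and its attack state is ``bad,'' as computed by Algorithm~\ref{alg:determineAtkDef}. We claim that for every node $n$, if $ST(n)$ equals $n$'s defense state, then $ST_h(n)$ also equals $n$'s defense state. In other words, $ST$ is pointwise no better than $ST_h$ in the attack/defense order.

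At the leaves this is immediate. Determined leaves agree between $ST$ and $ST_h$. Undetermined leaves are set in $ST$ to their attack state, so the premise of the implication never holds for them.

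For an internal node $n$, we case on its operator.
\begin{itemize}
\item If $n$ is AND, OR, or AT LEAST $k$, each child's attack/defense pair equals its parent's. The operator is monotone increasing in the number of \emph{true} children, and by Theorem~\ref{thm::Contribution} the orientation is consistent.
\item If $n$ is NOT or AT MOST $k$, the children's pairs are swapped relative to the parent. The operator is monotone decreasing in the number of \emph{true} children, so the swap again makes the node's ``goodness'' monotone in the children's ``goodness.''
\end{itemize}
In both cases the inductive hypothesis says the children of $n$ are at least as good in $ST_h$ as in $ST$, and monotonicity lifts this to $n$. At the root the defense state is \emph{true}, so $ST(r)=\textit{true}$ implies $ST_h(r)=\textit{true}$ for every consistent $h$. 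This gives safety at the location vector of $C$.

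The main obstacle is the monotonicity step. It needs the rewrites of Table~\ref{parse-modifications}, so that only AND, OR, NOT, AT LEAST and AT MOST nodes remain, and it must be checked that Algorithm~\ref{alg:determineAtkDef}'s pre-order propagation produces exactly the orientation under which each operator is monotone. Theorem~\ref{thm::Contribution} supplies this: each truth value of a node contributes to exactly one of satisfaction or violation.

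A second subtlety is that several leaves may refer to the same device. The worst-case assignment in $ST$ may then be physically unrealizable, but this only weakens $ST$, so the implication still holds. The result can yield false positives but never false negatives.
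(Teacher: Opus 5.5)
Your proof is correct and follows the same approach as the paper. Leaves fixed by the cut agree with every reachable state, undetermined leaves are pessimistically set to their attack states, and satisfaction in this worst case forces satisfaction in every consistent state; your structural induction on the attack/defense orientation makes explicit the monotonicity step that the paper compresses into an appeal to the One-To-One Contribution theorem (Theorem~\ref{thm::Contribution}).
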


\begin{proof}
The stateful tree $ST$ is an explicit evaluation of a safety rule (Section~\ref{sec:grammar}). Leaves set to \textit{true} or \textit{false} by the cut are guaranteed---all interleavings agree on this state. Leaves set to \textit{conflict} or \textit{unset} assume their attack state, which by Theorem~\ref{thm::Contribution} \textit{must} contribute to rule violation. Thus if the stateful tree's root is set to \textit{true}, then the values of explicitly known devices are sufficient for rule satisfaction.
\vspace{-1.8mm}
\end{proof}

In Algorithm~\ref{alg:cut-context}, cut safety in a rule parse tree $T$ with $N_T$ nodes takes $O(N_T)$ time. A set of rules in $\mathcal{S}$ takes $O(N_{\mathcal{S}})$ time, where $N_{\mathcal{S}}=\sum_{T \in \mathcal{S}}{N_T}$. %

\subsection{Correctness of Routine Safety} \label{subsec:RoutineSafetyCorrectness}

To avoid exploring all interleavings among routines while checking for safety, we combine the fact that cuts record the worst-case result of interleavings up to their location vector with the notion of cut causality from Section~\ref{sec:causalcuts}. The result is that cut safety can be extended to ensure interleaving safety, which we formalize below as a correctness theorem:

\begin{restatable}[Correctness of Routine Safety]{theorem}{correctness}
\label{thm::Merging}
Assume we have two safe multi-routine cuts $C1$ and $C2$, such that $C1$ precedes $C2$. If all cuts that succeed $C1$ and precede $C2$ are safe, then there exists no unsafe interleaving of the commands between $C1$ and $C2$'s location vectors. 
\IfRestatedTF{}{{\small (Proof in Appendix)}}
\end{restatable}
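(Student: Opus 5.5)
The argument will go through a correspondence between interleavings and lattice paths of cuts. Fix the routine set $R_C$ and the location vectors $I_1$ of $C1$ and $I_2$ of $C2$. Because $C1$ precedes $C2$, we have $I_1 \le I_2$ componentwise. Any interleaving of the commands between the two location vectors executes one command at a time, each from some routine $R_j$, while keeping every routine's commands in order. I will show that each such interleaving determines a chain of cuts
\[ C1 = K_0, K_1, \dots, K_m = C2, \]
in which each $K_{t}$ is an immediate predecessor of $K_{t+1}$ in the sense of Definition~\ref{def:immPred}; the step from $K_t$ to $K_{t+1}$ increments exactly the coordinate $j$ of the routine whose command ran. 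Every intermediate $K_t$ with $0<t<m$ then succeeds $C1$ and precedes $C2$ by Definition~\ref{def:pred}. By hypothesis each such $K_t$ is safe, and $K_0$ and $K_m$ are safe by assumption.

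The next step is to show that safety of every cut on the chain implies that no concrete execution along this interleaving reaches an unsafe space state, whatever the initial state. Take any actual space state $h_t$ reached after the first $t$ commands of the interleaving, starting from an arbitrary initial state. I will argue that $h_t$ agrees with the cut state list of $K_t$ on every leaf that $K_t$ marks \emph{true} or \emph{false}. The reason is that such a leaf was set by the most recent command touching it in some routine, and no other routine in the merged cut set it differently; otherwise Table~\ref{table::cut-state-merge} would have produced \emph{conflict}.

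For every leaf that $K_t$ marks \emph{unset} or \emph{conflict}, the Cut Contextualizer substitutes the attack state. By Theorem~\ref{thm::Contribution} the attack state is the value contributing to violation, so switching any such leaf to its actual value in $h_t$ can only move the root toward \emph{true}, never toward \emph{false}. A short monotonicity induction over the tree makes this precise: AND, OR and AT LEAST are monotone in their children, NOT and AT MOST are antitone, and Algorithm~\ref{alg:determineAtkDef} flips attack and defense exactly at the antitone edges. Hence the rule evaluates to \emph{true} on $h_t$. This is the same reasoning as in the proof of Theorem~\ref{def:cut-safe}, applied pointwise along the chain.

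Since the interleaving was arbitrary, no interleaving of the commands between the two location vectors yields a violating state, which proves the theorem.

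I expect the main obstacle to be the claim that $h_t$ agrees with $K_t$ on its determined leaves. A cut's state for a leaf is defined by what the routines in the cut have set so far. However, devices outside $R_C$ never change, and within $R_C$ the actual value of a leaf is the value written by whichever routine wrote it last in the interleaving. If only one routine wrote the leaf, or all writers agree, the cut's \emph{true}/\emph{false} value matches that last write. If the writers disagree, the cut records \emph{conflict}, and the attack-state substitution covers that case.

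One subtlety needs care: earlier writes that are later overwritten within the same routine. The cut keeps only the most recent state per routine, yet a different routine's stale write could still be the latest one in real time. I will handle this by merging the per-routine most-recent states, which is exactly how the multi-routine cut is defined, and by checking that the real last writer is one of those most-recent per-routine writes.
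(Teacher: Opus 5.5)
Your proposal is correct and is essentially the paper's argument: at every location vector between $C1$ and $C2$, the real leaf values can differ from the Cut Contextualizer's only on \emph{unset} or \emph{conflict} leaves, where the attack state was assumed, so satisfaction of the contextualized tree carries over to the real state. The paper argues by contradiction at an arbitrary location vector $t$ and leans on Theorem~\ref{thm::Contribution} for the attack-to-defense step, whereas your chain of immediate successors and your monotone/antitone induction make those two steps explicit (read $h_t$ as reached from an arbitrary state before any routine command runs, since $K_t$ also records writes made before $C1$).
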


\begin{corollary}\label{cor::knox_correct}
If all cuts in a routine or routine group are safe, then single/concurrent routine safety is satisfied.
\end{corollary}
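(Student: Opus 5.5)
The corollary follows from Theorem~\ref{thm::Merging} applied once, with $C1$ the initial cut and $C2$ the final cut of the routine or routine group.

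Fix a routine set $R_C=\{R_1,\dots,R_n\}$; a single routine is the case $n=1$. Let $C_0$ be the multi-routine cut with location vector $\langle 0,\dots,0\rangle$ and $C_F$ the cut with location vector $\langle |R_1|,\dots,|R_n|\rangle$. By Definition~\ref{def:pred}, every other cut of $R_C$ succeeds $C_0$ and precedes $C_F$, assuming some routine is nonempty (otherwise there are no interleavings to check). By hypothesis every cut is safe, including $C_0$, $C_F$ and everything strictly between them. Theorem~\ref{thm::Merging} then says no interleaving of the commands between $\langle 0,\dots,0\rangle$ and $\langle |R_1|,\dots,|R_n|\rangle$ is unsafe.

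Next I would relate interleavings to cuts. Any interleaving of $R_C$ that preserves per-routine order is a chain $C_0=K_0,K_1,\dots,K_m=C_F$ in which each $K_{t+1}$ is an immediate successor of $K_t$ (Definition~\ref{def:immPred}). Every intermediate space state of that interleaving therefore occurs at some cut in this chain. A violation of single or concurrent routine safety is exactly an interleaving reaching a state where some rule's root evaluates to false. Theorem~\ref{thm::Merging} already excludes this for every such chain.

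Three points need care. First, safety of $C_0$ must cover arbitrary initial space states. At $C_0$ every leaf is \emph{unset}, so the Cut Contextualizer puts every leaf in its attack state. By Theorem~\ref{def:cut-safe} and Theorem~\ref{thm::Contribution}, that attack-state assignment is the worst case for each leaf. Hence if the rule is still satisfied at $C_0$, it is satisfied for every concrete initial state.

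Second, "routine group" should be read as covering every subgroup of $\mathcal{R}$, and repeated copies of a routine count as separate members. Concurrent Routine Safety asks about all subgroups. So I would apply the argument above to each subgroup's cut lattice separately, using the hypothesis that all cuts of every group are safe.

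Third, the safety notion is per rule while the problem is over all of $\mathcal{S}$. I would handle this by noting that cuts and cut safety are defined per rule tree $T$, so "all cuts safe" means safe for every $T\in\mathcal{S}$. Applying the argument rule by rule then gives the conclusion for the whole rule set.

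The main obstacle I expect is the initial-state point, that is, justifying that safety of $C_0$ covers every concrete initial space state. It rests on the worst-case semantics of unset leaves, which Theorem~\ref{def:cut-safe} supplies. The remaining steps are bookkeeping that reduces to Theorem~\ref{thm::Merging}.
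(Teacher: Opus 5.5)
Your proposal is correct and matches the paper. The paper gives the corollary no separate proof: it is the direct instance of Theorem~\ref{thm::Merging} with $C1$ the all-zero cut and $C2$ the final cut, which is your argument, and your remarks on subgroups and per-rule checking are harmless elaborations. One aside on your ``main obstacle'': every leaf of $C_0$ is \emph{unset}, so the Cut Contextualizer puts all leaves in their attack states. By the per-child monotonicity established in the proof of Theorem~\ref{thm::Contribution}, $C_0$ is then safe only when the rule is a tautology in its leaves treated as independent variables. Safety of $C_0$ is therefore an extremely strong hypothesis rather than the delicate step, and the substantive content is the pointwise soundness at each location vector inside the proof of Theorem~\ref{thm::Merging}.
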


\noindent The above provides explainability for rule violation: 
\squishenumerate
    \item Given an unsafe interleaving, we can point to exactly which safety rule was violated.
    \item We can provide the exact commands that may cause an unsafe interleaving for a rule by taking the most recent settings of non-defense leaves.
    \item Using the causal relationship, we can determine that interleavings between subgroups of the checked routine set are safe. This holds even if an unsafe interleaving exists at a different location vector.
\squishenumerateend

\subsection{Knox's Baseline Routine Safety}\label{subsec::Mlirad}

Algorithm~\ref{alg::core} provides an overview of our system. We first analyze rules for their attack/defense states based on contribution (Section \ref{subsec::Contribution}). Then, Algorithm \ref{alg::single} finds and reports statically safe and unsafe individual routines by constructing and analyzing a stateful tree for each rule and single-routine cut. Finally, Algorithm \ref{alg::concurrent} finds and reports all unsafe cuts by constructing and checking multi-routine cuts for every location vector in the routine set.

\begin{algorithm}
\caption{\bf Core System}
\label{alg::core}
{\small
\KwIn{Feasible safety rule parse trees $\mathcal{S}$, Routines $\mathcal{R}$, Devices $\mathcal{D}$, Device States $\mathcal{D}_s$}
\KwOut{SafeRoutines, UnsafeRoutines, UnsafeCuts}
\ForEach{$T \in \mathcal{S}$}{
    Analyze leaves in $T$ for contribution (Section~\ref{subsec::Contribution})\;
}
Construct all single routine cuts\;
SafeRoutines, UnsafeRoutines $\gets SingleRoutineSafety(\mathcal{S}, \mathcal{R}, \mathcal{D}, \mathcal{D}_s)$ (Algorithm~\ref{alg::single})\;
UnsafeCuts $\gets ConcurrentSafety(\mathcal{S}, \text{SafeRoutines}, \mathcal{D}, \mathcal{D}_s)$ (Algorithm~\ref{alg::concurrent})\;
return \{SafeRoutines, UnsafeRoutines, UnsafeCuts\}\;
}
\end{algorithm}

\begin{algorithm}
\caption{\bf Single Routine Safety (Baseline Knox)}
\label{alg::single}
{\small
\KwIn{Feasible safety rule parse trees $\mathcal{S}$, Routines $\mathcal{R}$, Devices $\mathcal{D}$, Device States $\mathcal{D}_s$}
\KwOut{SafeRoutines, UnsafeRoutines}
SafeRoutines, UnsafeRoutines $\gets \varnothing$\;
\ForEach{$R \in \mathcal{R}$}{
    \ForEach{$T \in \mathcal{S}$}{
        \uIf{$R$ has no effect on $T$}{
            continue\;
        }
        \ForEach{Single Routine Cut $C$ in $R$}{
            Stateful Tree ST $\gets CutContextualizer(C)$\;
            \uIf{ST violated}{
                UnsafeRoutines $\gets$ UnsafeRoutines $\cup \{R\}$\;
                go to next routine\;
            }
        }
    }
    SafeRoutines $\gets$ SafeRoutines $\cup \{R\}$\;
}
return SafeRoutines, UnsafeRoutines\;
}
\end{algorithm}

\begin{algorithm}
\caption{\bf Concurrent Routine Safety (Baseline Knox)}
\label{alg::concurrent}
{\small
\KwIn{Feasible safety rule parse trees $\mathcal{S}$, Routines $r_{1..n}$, Devices $\mathcal{D}$, Device States $\mathcal{D}_s$}
\KwOut{UnsafeCuts}
UnsafeCuts $\gets \varnothing$\;
\ForEach{location vector $<i_1,...,i_{|\mathcal{R}|}>$}{
    Construct multi-routine cut $C$ for this location vector\;
    Stateful Tree ST $\gets CutContextualizer(C)$\;
    \uIf{ST violated}{
        UnsafeCuts $\gets$ UnsafeCuts $\cup \{ C \}$\;
    }
}
return UnsafeCuts\;
}
\end{algorithm}

\subsubsection{Runtime Complexity Analysis}
\label{CoreSystemAnalysis}

\begin{table}
{\centering\small
    \vspace{2mm}
    \begin{tabular}{c|l}
        \toprule
        Variable & Definition\\
        \midrule
        $R_{max}$ & The maximum length routine \\
        $K$ & Total single routine cut locations
        across all routines \\
        $V$ & Total multi routine cut locations across all routines \\
        $N_T$ & Node count in tree $T$ \\
        $N_\mathcal{S}$ & Node count across all trees of safety rules in $\mathcal{S}$ \\
        $L_T$ & Leaf count in tree $T$ \\
        $L_\mathcal{S}$ & Leaf count across all trees of safety rules in $\mathcal{S}$ \\
        \bottomrule
    \end{tabular}
    \vspace{-2mm}
    \caption{\bf Recurring Runtime Analysis Variables} 
    \label{tab:analysisvars}
    \vspace{-2mm}
    }
\end{table}

\textbf{Single Routine Safety}: To construct all single routine cuts,
we consider all locations. We define the total number of locations across all routines as $K = \sum_{R_i\in\mathcal{R}} (|R_i| + 1)$. We assume a data structure mapping devices to leaves in rule parse trees, built when each routine is inserted in time proportional to the routine's size. When examining a command, we consult the mapping to find the leaves modified while constructing a cut. For each location and rule, we construct the leaf mapping in $O(L_T)$ time, where $L_T$ is the number of leaves in tree $T$. Defining $L_\mathcal{S} = \sum_{T\in\mathcal{S}} L_T$, constructing all single routine cuts across all rules takes $O(K \cdot L_\mathcal{S})$ time.

For single routine safety, we simply check each of these single routine cuts for safety using the Cut Contextualizer (Algorithm~\ref{alg:cut-context}), which takes $O(K\cdot N_\mathcal{S})$ time.

\noindent \textbf{Concurrent Routine Safety}: This checks multi-routine cuts at each possible location vector. Define  $V=\prod_{R_i\in\mathcal{R}} (|R_i| + 1)$ as  total number of location vectors. To construct all multi-routine cuts, we follow a lattice (using memoization), merging in one new single routine cut in $O(L_\mathcal{S})$ time each iteration for a total runtime of $O(V\cdot L_\mathcal{S})$.
For each, we must run the Cut Contextualizer to determine safety, for a total runtime of $O(V\cdot (L_\mathcal{S} + N_\mathcal{S}))$. We note that $V=O(|R_{max}|^{|\mathcal{R}|})$, where $R_{max}$ has the maximum number of commands of any routine.

The Cut Contextualizer complexity is {\it linear}, as opposed to using SMT to solve the Single and Concurrent Routine Safety problems, which would have made the complexity exponential. %

\noindent \textbf{Adding and Removing Routines and Safety Rules}: When a new rule is inserted, %
the above algorithms only check existing  routines against that rule.
A new routine is checked only against all existing rules, along with cuts involving all routines (this is unavoidable as the new routine will have interleavings with existing ones). %
Removal of routines or safety rules never makes a safe system unsafe, though cuts previously considered unsafe may be updated. %

\subsection{Reducing False Positives}
\label{reducingFalsePositives}

By design, Knox misses no unsafe cuts, but as it errs on the side of safety it may be prone to false positives: points in execution where a cut is identified as unsafe, but no interleaving from some initial state violates a rule. Under certain conditions, however, Knox can guarantee these cannot occur:

\begin{restatable}[Single-Routine: Avoidance of False Positives]{theorem}{singleroutinefp}
\label{SingleRoutineFalsePositives}
For a single routine, and a rule parse tree $T$, if $T$ has at most one leaf for each device in the smart space, then Knox avoids false positives.
\IfRestatedTF{}{({\small Proof in Appendix})}
\end{restatable}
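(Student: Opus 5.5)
To prove Theorem~\ref{SingleRoutineFalsePositives}, I would show that whenever the Cut Contextualizer (Algorithm~\ref{alg:cut-context}) reports a single-routine cut $C$ of routine $R$ as unsafe for rule $T$, I can exhibit an explicit initial space state $h_0$ such that running $R$ from $h_0$ up to $C$'s location really produces a space state that violates $T$. The cut is therefore a true positive. The witness is built directly from the stateful tree $ST$ that the Cut Contextualizer constructs.

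\textbf{Step 1 (leaves fixed by $R$).} For a single routine the cut state of every leaf is \emph{unset}, \emph{true}, or \emph{false}; \emph{conflict} cannot occur (Definition~\ref{def:SingleRoutineCut}). Leaves that are \emph{true} or \emph{false} at $C$ are determined by the most recent command of $R$ on that device. These values hold in the actual space state at $C$ regardless of the initial state.

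\textbf{Step 2 (unset leaves).} Each unset leaf $l$ refers to a device $d_l$ that $R$ has not written before location $C$. The actual value of $d_l$ at $C$ therefore equals its value in $h_0$. The Cut Contextualizer assigned $l$ its attack state. Since $T$ has at most one leaf per device, the map $l \mapsto d_l$ is injective. This lets me choose $h_0(d_l)$ independently for each unset leaf so that the atom of $l$ evaluates to $l$.attackState. I also need the atom to be satisfiable and falsifiable, which I would argue from feasibility and non-degenerate atoms. Devices not appearing in unset leaves are given arbitrary values. Their leaves, if any, are already fixed by Step 1, and the single-leaf hypothesis rules out any clash between Steps 1 and 2. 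This injectivity argument is the main obstacle: it is exactly where the degenerate example \texttt{AC == ON OR AC == OFF} fails. I would also handle the subtlety that a device written by $R$ after $C$ is irrelevant, because only the prefix up to $C$ runs.

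\textbf{Step 3 (conclude).} Under $h_0$, the true stateful tree at location $C$ assigns every leaf exactly the value that $ST$ assigns. Tree evaluation (Section~\ref{subsec::StatefulTree}) is deterministic, so the roots agree. $ST$'s root is \emph{false}, so the rule is actually violated by executing $R$ from $h_0$ up to $C$, and Knox's report at $C$ is not a false positive. Because every flagged cut has such a witness, Knox produces no false positives for single routines.
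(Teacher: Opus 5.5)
Your proposal is correct and follows essentially the same route as the paper's proof. You build a witness initial state by placing every device behind an \emph{unset} leaf in its attack state, use the one-leaf-per-device hypothesis to rule out contradictory requirements (the \texttt{AC == ON OR AC == OFF} pathology), and observe that the real state after running $R$'s prefix up to the cut coincides with the Cut Contextualizer's stateful tree.

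One small caveat: feasibility of the rule set does not by itself make each atom's attack value realizable. An atom such as \texttt{thermostat < 1000} may be identically true on the device's domain, so non-degeneracy is a genuine extra assumption---one the paper's proof also uses implicitly when it sets each device to its attack state.
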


\noindent We first extend notions of attack and defense states  (from Section~\ref{sec:attackdefense}):

\begin{definition}[Command Classification:
Attack, Defense, and Ambiguous Commands] \label{def::commandclass}
For a rule parse tree, a command $c$ is: 
(i) an \textit{\underline{attack command}} if $c$ sets all affected leaves to attack states; (ii) a \textit{\underline{defense command}} if $c$ sets all affected leaves to defense states; (iii) otherwise it is an \textit{ambiguous command}.
\end{definition}

\noindent We call this derivation  \emph{Command Classification}.
Intuitively, attack commands can only ever contribute to rule violation while defense commands do the opposite (w.r.t.  a given rule). Ambiguous commands need special care, %
as they induce both attack and defense states.

\noindent {\underline{Example}}: Consider Table \ref{table::rule-examples}'s Rule 3. Command AC = OFF is a defense command: the leaf AC == OFF would be set to \textit{true}, which is its defense state. Command window = OPEN is an attack: setting leaf window == OPEN to \textit{true} would put the leaf in attack state. 

\noindent {\underline{Example}}: Consider the rule \texttt{($A_1$ AND $B_2$) OR $A_2$}. The command $A = A_1$ is ambiguous, setting leaf $A == A_1$ to its defense state \textit{true}, but $A == A_2$ to its attack state \textit{false}.

\noindent Now we can prove that under some conditions, multi-routine safety incurs no false positives:  

\begin{restatable}[Multi-Routine: Avoidance of False Positives]{theorem}{multiroutinefp}
An unsafe multi-routine cut $C$ exists if and only if an unsafe interleaving exists for some initial state, provided either conditions (A1 AND A2) are true, or B is true, i.e.:
\squishlist
\item (A1) The rule parse tree $T$ has at most one leaf for each device in the smart space, AND
\item (A2) For all leaves in $C$ set to \textit{conflict}, all routines execute attack commands \textit{after} defense commands
\squishend
OR
\squishlist
\item (B) At most one leaf in $C$ is set to \textit{conflict}. 
\squishend
\end{restatable}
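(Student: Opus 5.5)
The plan proves the two directions separately. The direction ``unsafe interleaving $\Rightarrow$ unsafe cut'' needs neither (A1, A2) nor (B). It is the completeness already underlying Theorem~\ref{def:cut-safe} and Corollary~\ref{cor::knox_correct}. Suppose some interleaving from some initial state violates $T$ at a point. Take the location vector reached at that point and the merged cut $C$ there. Every leaf marked \textit{true}/\textit{false} in $C$ agrees with its actual value in that execution. Every \textit{unset}/\textit{conflict} leaf is replaced by its attack state. By Theorem~\ref{thm::Contribution} and the monotonicity encoded in Algorithm~\ref{alg:determineAtkDef}, flipping a leaf from its defense to its attack state can never move the root from false to true. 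So the Cut Contextualizer's root is at most the actual (false) root, and $C$ is unsafe.

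For the converse, fix an unsafe cut $C$ and build an initial state $h$ and an interleaving reaching $C$'s location vector. At that point every leaf must take the value the Cut Contextualizer assigned; the same monotonicity then forces the actual root to be false. The construction runs leaf by leaf. Leaves that are \textit{true}/\textit{false} need nothing, since every interleaving agrees on them. For \textit{unset} leaves, no routine touched the device, so we choose $h$ to put that leaf in its attack state. Under (A1) this is consistent, because each device feeds only one leaf. Under (B) we need care when several leaves share a device; the plan is to argue that an unset device's leaves can be set jointly, or else to note that the contextualizer's worst case is only required for the single conflicting leaf. I expect this case to need a refinement of the statement's reading. For the \emph{conflict} leaves, the schedule must end with a routine whose most recent command on that leaf sets it to the attack state.

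The main obstacle is scheduling so that all conflict leaves land in attack simultaneously while respecting intra-routine order. Under (B) there is only one conflict leaf. Pick a routine $R_j$ whose latest write (up to $l_j$) puts the leaf in its attack state; such a routine exists by the definition of \emph{conflict}. Execute all other routines first up to their locations, then $R_j$ up to $l_j$. Other leaves are true/false/unset and unaffected by ordering. Under (A1, A2), each conflict leaf corresponds to its own device. By (A2), every routine's attack commands on that leaf come after its defense commands. Hence a routine's last write to the leaf is an attack whenever it wrote any attack. We then order the routines so that, for every conflict leaf, some attack-writing routine's final write occurs after all defense writes. One such ordering runs each routine's prefix containing only defense writes first (all routines), then the remaining suffixes. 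Because (A2) makes every suffix start after the routine's defense commands, all final writes in phase two are attacks. Ambiguous commands only touch the single leaf per device under (A1), so they are classified unambiguously per leaf. Verifying that this two-phase schedule is a legal interleaving completes the argument: it respects per-routine order because we split each routine at one point. It then yields the required unsafe execution.
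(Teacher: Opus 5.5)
Your route is the paper's. The direction ``unsafe interleaving $\Rightarrow$ unsafe cut'' is the completeness argument, which the paper gets by citing Theorem~\ref{thm::Merging}. For the converse you build, as the paper does, an initial state with unset leaves in attack, plus a schedule that leaves every conflict leaf in attack. Under (A1, A2) that schedule runs defense-prefixes, then attack-suffixes. Under (B) it runs the routine $R_j$ whose last write is an attack entirely last, which is a cleaner version of the paper's ``$R_2$'s defense, then $R_1$'s attack.'' Two points need to be settled rather than hedged.

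\textbf{The case you leave open.} Neither repair you suggest works for (B) with several leaves on an untouched device, because the statement is false there when read per cut, as you and the paper both read it. Read globally it holds only trivially: for an admitted rule, the all-unset initial cut is unsafe, and so is the empty execution from a violating state. Here is the counterexample.
Take \texttt{(A==ON AND X==ON) OR (A==OFF AND Y==ON)} with binary $A$, $R_1$ = \texttt{X=ON}, $R_2$ = \texttt{Y=ON}.
At $\langle 1,1\rangle$ there is no conflict leaf, so (B) holds. Both $A$-leaves are unset and are sent to \textit{false}, so Knox flags the cut, yet every real state at that point satisfies the rule.
The two $A$-leaves cannot be put in attack jointly. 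Also, the contextualizer's worst case is applied to every unset leaf, not only to the conflict one.
The paper's proof skips this as well: it tacitly imports (A1) by ``following the thinking'' of Theorem~\ref{SingleRoutineFalsePositives}.
So prove the result under (A1) $\wedge$ ((A2) $\vee$ (B)). With (A1), your (B) schedule is complete.

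\textbf{The reading of (A2).} A single split point per routine works only if (A2) means that, within each routine, every defense write to a conflict leaf precedes every attack write to a conflict leaf. The per-leaf paraphrase you give is too weak, even with (A1).
For \texttt{D1==ON OR D2==ON}, take $R_1$ = \texttt{D1=OFF; D2=ON} and $R_2$ = \texttt{D2=OFF; D1=ON}. These satisfy per-leaf (A2) and make both leaves conflict at $\langle 2,2\rangle$. But ending with both devices off would need $R_2$'s last command before $R_1$'s first, and $R_1$'s last before $R_2$'s first, which is a cycle. So Knox's unsafe verdict there is a false positive.
The paper uses the per-routine reading (``all routines execute their defense commands first, and then all routines execute their attack commands''), so state it explicitly. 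Under that reading, your two-phase schedule is a legal interleaving and every conflict leaf ends in attack.
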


\section{Optimizations} \label{sec:opt}

Two optimizations further reduce safety checking runtime---\emph{Attack/Defense} and \emph{Wall/Siege}.

\subsection{Attack/Defense} \label{subsec::AttDef}
First, we re-apply the Command Classifications on attack and defense commands (Definition~\ref{def::commandclass}). This helps skip certain safety checks without violating completeness.

\begin{restatable}[Irrelevant Cuts]{theorem}{AtkDefThm}
\label{thm::AttackDefense}
A cut that satisfies one of:
\squishlist
\item {\bf \textit{Case 1}}. Has a \textbf{safe} immediate predecessor, where $l'_j$ is a \textbf{defense} command (recall Definition~\ref{def:immPred})
\item {\bf \textit{Case 2}}. Has an immediate successor (no safety qualification needed), where $l'_j$ is an \textbf{attack} command
\squishend
are \textbf{irrelevant}, and are not needed to detect rule violation.
\end{restatable}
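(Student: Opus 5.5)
The key fact to establish first is \emph{monotonicity of the Cut Contextualizer} under command classification. Let $P$ be an immediate predecessor of $C$ with $l'_j = l_j - 1$, so $C$ is obtained from $P$ by executing one command $c$ of routine $R_j$.

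I would show that if $c$ is a defense command for tree $T$, then the stateful tree of $C$ is at least as close to satisfaction as that of $P$, i.e., $ST_P(\text{root}) = \textit{true}$ implies $ST_C(\text{root}) = \textit{true}$. Symmetrically, if $c$ is an attack command, $ST_C(\text{root}) = \textit{true}$ implies $ST_P(\text{root}) = \textit{true}$. Here ``the command at $l'_j$'' is read as the command separating the two cuts.

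The proof goes leaf by leaf and then up the tree. In $P$, each leaf $\ell$ affected by $c$ is evaluated in the Contextualizer either as its attack state (if \emph{unset}/\emph{conflict}) or as its explicit value. A defense command moves $\ell$'s merged cut state toward its defense state using Table~\ref{table::cut-state-merge}:
\begin{itemize}[leftmargin=1.5em]
\item \emph{unset} becomes the defense value, which replaces the attack state;
\item the opposite explicit value becomes \emph{conflict}, evaluated as attack, which is no worse than before;
\item \emph{conflict} stays \emph{conflict};
\item the defense value stays the defense value.
\end{itemize}
One subtlety needs care: with a single routine, a later command overwrites rather than merges, so an attack value becomes the defense value, which is again no worse. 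In every case, each affected leaf's evaluated value either stays the same or flips from attack to defense. Unaffected leaves are unchanged.

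Next I would prove, by induction on height using Algorithm~\ref{alg:determineAtkDef} and Theorem~\ref{thm::Contribution}, that every internal operator is monotone in the attack/defense ordering. AND, OR, and AT LEAST are monotone in children whose attack/defense labels are inherited. NOT and AT MOST are antitone in the raw truth value, but the labels are swapped, so they are monotone in the attack/defense ordering as well. Flipping a set of leaves from attack to defense therefore can only move each ancestor from its attack state to its defense state, and never the reverse. In particular, the root cannot go from \textit{true} to \textit{false}. The attack case is the dual argument.

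With this in hand, both cases follow.

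\emph{Case 1.} $P$ is safe and $c$ is a defense command, so $C$ is safe by monotonicity. Checking $C$ can never report a violation, so skipping it loses nothing.

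\emph{Case 2.} $C$ has an immediate successor $S$ reached by an attack command. If $C$ is unsafe, then by the dual monotonicity $S$ is unsafe too. The violation is therefore still detected at $S$ (or transitively at some later relevant cut), so $C$ need not be checked. For the ``transitively'' step I would induct on the number of remaining commands: chains of attack successors terminate at a cut whose successors are not attack commands, or at the final location vector, which is never Case-2 irrelevant.

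Completeness then follows from Corollary~\ref{cor::knox_correct}: every unsafe cut is either checked or has a checked unsafe witness. That witness identifies the same rule, and, following the paper's explainability remark, its most recent attack leaf settings give the offending commands.

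The main obstacle is the monotonicity lemma in the presence of devices that appear in several leaves and of \emph{conflict} states. The definition of attack/defense command quantifies over all affected leaves, which is what makes the leafwise argument go through. Still, I would first verify carefully that merging with \emph{conflict} never turns an evaluated defense leaf into an attack leaf under a defense command. By Table~\ref{table::cut-state-merge}, merging yields \emph{conflict} only from an opposite explicit value, so an evaluated defense leaf can never arise before the merge and then be lost.
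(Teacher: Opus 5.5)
Your proposal is correct and follows essentially the same route as the paper: a leafwise case analysis showing that a defense (resp.\ attack) command can only move each affected leaf's Cut Contextualizer value toward its defense (resp.\ attack) state, then a conclusion at the root. What you add is an explicit proof, via Algorithm~\ref{alg:determineAtkDef}, of the operator monotonicity that the paper leaves implicit in ``the only changes increase the total leaves contributing to rule satisfaction,'' together with the chain-termination argument for Case~2 that the paper omits (to close that argument, note that Case-1 cuts are always safe, so the unsafe endpoint of a chain is never skipped).
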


\begin{proof}
Consider an arbitrary cut $C$. We consider both cases: %

\noindent \textbf{Case 1}:
Our cut has some \textbf{safe} immediate predecessor, with  the differing command a \textbf{defense}.
A defense  only allows the leaf it touches to contribute towards rule satisfaction:
\squishlist
\item If the leaf was in its attack state in the immediate predecessor, then it will be a defense state if the attack was from the same routine or a \textit{conflict} if not.
\item If the leaf was in \textit{conflict} or defense states, it remains. %
\item If the leaf was \textit{unset}, it will be set to its defense state.
\squishend
In all three subcases, no leaf that contributed to rule satisfaction is changed to contribute to violation. Since the predecessor cut was already safe and the only changes increase the total leaves contributing to rule satisfaction, $C$ remains safe, with no further checking required.

\noindent \textbf{Case 2}:
Our cut has some immediate successor, where the differing command is an \textbf{attack}.
Attack commands allow only contributions to rule violation:
\squishlist
\item If the leaf was in its defense state in $C$ then it is set to its attack state in the immediate successor (if defense was from the same routine as the attack command) or a \textit{conflict} if not.
\item If the leaf was \textit{unset}, then it is set to its attack state in the immediate successor.
\item If the leaf was in \textit{conflict} or attack states, it remains. %
\squishend
In all 3 subcases, the immediate successor's leaf records an attack state. So if $C$ was unsafe, all changes only raise the number of leaves violating the rule. Hence the immediate successor is unsafe, and we can skip checking $C$.
\end{proof}

\begin{restatable}{theorem}{SafeStart}
\label{thm::SafeStart}
If the multi-routine cut located before any routine commands execute is safe, then any cut located immediately {\it after} a defense command is irrelevant and can be skipped for safety checking. 
\IfRestatedTF{}{{\small (Proof in Appendix)}}
\end{restatable}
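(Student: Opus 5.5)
The plan is to reduce this statement to Case~1 of Theorem~\ref{thm::AttackDefense}. Concretely, I will show by induction over the cut lattice that every cut located immediately after a defense command has a \emph{safe} immediate predecessor from which it differs by exactly that defense command. Fix the rule tree $T$ and the routine set. A cut $C$ with location vector $I$ is ``located immediately after a defense command'' when some coordinate $l_j \geq 1$ has the property that command $l_j$ of routine $R_j$ is a defense command for $T$. Its immediate predecessor $P$ in the sense of Definition~\ref{def:immPred} has the same location vector except that coordinate $j$ is $l_j-1$. So $C$ is obtained from $P$ by applying that single defense command. If $P$ is safe, Case~1 of Theorem~\ref{thm::AttackDefense} applies directly: the defense command only moves touched leaves from attack or unset to defense or conflict, or leaves them unchanged, so no leaf changes from contributing to satisfaction to contributing to violation. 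Hence $C$ is safe and irrelevant.

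The real work is to argue that skipping $C$ never hides a violation, even when $P$ itself has not been checked or was skipped. I will induct on $|I| = \sum_i l_i$.
\begin{itemize}
\item The base case is the all-zeros cut, which is safe by hypothesis.
\item For the inductive step, take any cut $C$ located after a defense command. Peel off defense commands backwards: follow immediate predecessors, each time removing the last command of some routine that is a defense command.
\item This continues until we reach a cut $C_0$ that is either the initial cut or is reached only by non-defense last commands. Every cut on this chain after $C_0$ is entered via a defense command.
\item Applying the monotonicity of Case~1 repeatedly, safety of $C_0$ propagates up the chain to $C$.
\end{itemize}
$C_0$ is either the initial cut, safe by assumption, or a cut that is \emph{not} skipped and is therefore explicitly checked by Algorithm~\ref{alg::concurrent}. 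So if every non-skipped cut passes the check, every skipped cut is also safe. By Corollary~\ref{cor::knox_correct} and Theorem~\ref{thm::Merging}, single and concurrent routine safety are then preserved. Conversely, if some skipped cut were unsafe, the chain would contain an unsafe checked cut, so completeness is not lost.

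The main obstacle is making the monotonicity step rigorous through the merge table (Table~\ref{table::cut-state-merge}) and the Cut Contextualizer. There are two points to settle.
\begin{itemize}
\item A leaf moving from a defense state to \emph{conflict} must not count as a harmful change. I will argue this subcase cannot arise from a defense command: a defense command merges a defense value into the leaf, and the leaf only becomes a conflict if it was previously set to its attack value by another routine. That is a move from attack to conflict, and both are evaluated as attack by Algorithm~\ref{alg:cut-context}, so the evaluated stateful tree is unchanged or improved.
\item I also need that flipping leaves from attack to defense can never turn a satisfied root false. This follows from Theorem~\ref{thm::Contribution} together with the attack/defense assignment of Algorithm~\ref{alg:determineAtkDef}, via a short induction on tree height: each operator (AND, OR, AT LEAST, NOT, AT MOST) is monotone in its defense-oriented children.
\end{itemize}
With that lemma, the chain argument goes through.
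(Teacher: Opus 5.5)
Your proposal is correct and follows the same route as the paper: it reduces to Case~1 of Theorem~\ref{thm::AttackDefense}, with the safe predecessor being either the initial cut (safe by hypothesis) or a cut not located after any defense command, which is therefore checked. Your backward chain of peeled defense commands is more careful than the paper's two-case proof, which assumes the immediate predecessor is preceded by an attack or ambiguous command and so overlooks consecutive defense commands (in the same routine or in another routine's coordinate); that is exactly the case your induction handles.
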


Theorems~\ref{thm::AttackDefense}~and~\ref{thm::SafeStart} are powerful because they imply that given a safe initial state, we need not evaluate immediate successors of defense commands.
Hence the Attack/Defense optimization modifies the Baseline Knox algorithm of Section \ref{subsec::Mlirad} by first executing Command Classification (Definition~\ref{def::commandclass}) and then assuming a safe start. Multi-routine cuts are verified only if they are located %
(i) after the end of an attack command sequence, or
(ii) after an ambiguous command.
Its runtime is:

\begin{restatable}[Attack/Defense Runtime]{theorem}{AttackDefenseRuntimeThm}
\label{AttackDefenseRuntimeThm}
With \({K_{ad} = \sum_{R_{i}\in\mathcal{R}} \frac{|R_i|}{2}}\),  runtime of checking concurrent routine safety after applying the Attack/Defense optimization is
\[
{\textstyle O(N_\mathcal{S} + K_{ad} \cdot L_\mathcal{S} + K_{ad} \cdot N_{\mathcal{S}} + \left(\frac{|R_{max}|}{2}\right)^{|\mathcal{R}|} \cdot (L_\mathcal{S} + N_\mathcal{S}))}
\]
{\small (Notations are in Table~\ref{tab:analysisvars})}. 
\IfRestatedTF{}{ {\small (Proof in Appendix)}}
\end{restatable}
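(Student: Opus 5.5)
The plan is to account for the cost of each phase of the Attack/Defense variant separately and then sum the terms. The four summands of the bound correspond to four phases: (i) computing attack/defense states for every rule tree, (ii) classifying commands and building the surviving single-routine cuts, (iii) checking those single-routine cuts with the Cut Contextualizer, and (iv) building and checking the surviving multi-routine cuts over the location-vector lattice.

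For phase (i), we invoke Algorithm~\ref{alg:determineAtkDef}. It takes $O(N_T)$ per tree, and so $O(N_\mathcal{S})$ over all rules; this is the first term.

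For phases (ii) and (iii), we bound how many cut locations survive after Theorems~\ref{thm::AttackDefense} and~\ref{thm::SafeStart}. Under the safe-start assumption, a cut is verified only if it sits after the end of a maximal run of attack commands or after an ambiguous command. The number of such locations in $R_i$ is therefore bounded by the number of "attack-run endpoints plus ambiguous commands."

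The main obstacle is justifying the factor $\frac{|R_i|}{2}$. In the worst case literally every location survives, for example when all commands are ambiguous. So I would first state explicitly the modelling assumption behind $K_{ad}$: commands alternate between relevant and skippable cuts, so that on average half the locations of each routine are pruned. A consecutive attack run collapses to its last cut, and any cut following a defense command is skipped. Under that assumption each routine contributes $O(|R_i|/2)$ checked locations, and summing over routines gives $K_{ad}$.

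Command classification itself costs, for each command, a lookup in the device-to-leaf map followed by a comparison against the attack/defense labels. Cut construction for each retained location costs $O(L_\mathcal{S})$ as in Section~\ref{CoreSystemAnalysis}, which gives $O(K_{ad}\cdot L_\mathcal{S})$. I would argue that classification is absorbed into this term, since touching the affected leaves is already paid for there. Running the Cut Contextualizer on each retained single-routine cut costs $O(N_\mathcal{S})$, which gives $O(K_{ad}\cdot N_\mathcal{S})$.

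For phase (iv), I would mirror the baseline lattice argument. Multi-routine cuts are built by memoized merging in $O(L_\mathcal{S})$ each and checked in $O(N_\mathcal{S})$ each. Only location vectors whose every coordinate is a retained location need to be examined, so that number is at most $\prod_{R_i}(|R_i|/2)\le (|R_{max}|/2)^{|\mathcal{R}|}$. A point needing care here is the memoized lattice construction: merged state at a retained vector depends on skipped predecessors. I would argue that the merge operation of Table~\ref{table::cut-state-merge} is associative and commutative. Hence each retained cut can be formed by merging the single-routine cut states at the retained coordinates directly, and no intermediate lattice points need to be materialized. This gives $O\big((|R_{max}|/2)^{|\mathcal{R}|}(L_\mathcal{S}+N_\mathcal{S})\big)$. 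Summing the four phases yields the stated bound.
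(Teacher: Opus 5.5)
Your proposal is correct and follows essentially the paper's argument. The paper likewise concedes that the all-ambiguous case recovers the baseline bound, obtains $K_{ad}$ and $(|R_{max}|/2)^{|\mathcal{R}|}$ by restricting to routines with no ambiguous commands that start with an attack and then alternate defenses and attacks, and sums the phase costs; the leading attack is what lets you drop location $0$ via Case~2, including as a coordinate of multi-routine vectors, which your count "every coordinate is a retained location" silently assumes. Two small tightenings: classification over all $I=\sum_i|R_i|$ commands costs $O(I\cdot L_\mathcal{S})$ and is absorbed simply because $I=2K_{ad}$ by definition, not because retained cuts already touch those leaves; and merging the $|\mathcal{R}|$ single-routine cuts afresh at every retained vector would cost $O(|\mathcal{R}|\,L_\mathcal{S})$ per vector, so to keep $O(L_\mathcal{S})$ amortized you should memoize over the product grid of retained coordinates, building each vector from its prefix by one merge.
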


\subsection{Wall/Siege} \label{subsec::WallSiege}

The nature of cut state \textit{merging} (Table~\ref{table::cut-state-merge}) leads us to another optimization. As we progress through the concurrent routine safety algorithm, we can limit our focus to only the leaves which contribute to rule satisfaction. The merge operation guarantees that this set can only shrink as we check more routines, allowing a quick short-circuit once it is empty (i.e., once nothing can satisfy the rule). As the set shrinks, it also determines a minimal set of leaves sufficient for rule satisfaction, letting us safely ignore changes to leaves outside it. We thus become \textit{more} efficient as checking progresses, with the potential to outperform the Attack/Defense optimization. \emph{Combining} these short-circuit and shrinking-set optimizations powerfully reduces the number of checks required by the base algorithm.

Formally, once a leaf in a cut is set to \textit{conflict} or its attack state (Definition~\ref{def::cutState}), it will {\it continue to remain} in its attack state in the %
search, 
regardless of any future merges~\footnote{Recall the merge operation from Table \ref{table::cut-state-merge}: if an attack state is merged with a matching state, it is preserved; else the merge causes a Conflict, which  Cut Contextualizer (Section \ref{subsec:cut-context}) presumes is attack.}. Call the collection of leaves that have not entered this preserved attack state a {\it \underline{Wall}}. We then define safety-conforming procedures called {\it \underline{Sieges}} that remove leaves from it: %

\begin{definition}[Wall]\label{def::wallDefinition}
For a given rule parse tree $T$, its leaves $T_L$, and a cut $C$, a wall is a partial map: \\$W : T_L \to \{\textit{true}, \textit{false}, \textit{unset}, \textit{conflict}\}$ such that any leaves $l$ that are Unset or in their Defense State are contained within the partial mapping. These leaves $l$ have a defined $W(l) = M(l)$ giving their known cut state with respect to $C$. We say this wall corresponds to cut $C$.
\end{definition}

\begin{definition}[Siege]\label{def::seige}
    A single routine cut $C$ can {\it siege} the wall $W$ and produce a new wall $W'$ by: 
        (1) Removing any leaf mappings in $W$ that record an attack or conflict cut state in $C$, 
        (2) Updating state of any leaves in $W$ that previously recorded an Unset state to a defense state recorded in $C$. 
    Let $W$ correspond to cut $D$. Then we say $W'$ corresponds to merged result of $C$ and $D$.
\end{definition}
Note that a siege can only preserve or reduce a wall's total leaf mappings, i.e., walls change only monotonically after construction.  Sieges function similarly to merges (Section \ref{subsec::Cut}) but occur between a wall and a single routine cut rather than two cuts of any disjoint single routine composition.

\begin{restatable}[Wall Equivalence]{theorem}{WallEqThm}
\label{thm::siege1.2}
For an arbitrary wall $W$ corresponding to an arbitrary cut $C$, let $L_1$ be a set of truth values defined by both criteria:
    (1) All leaves inside $W$ are set to a value defined by the Cut Contextualizer, 
    (2) All leaves not inside $W$ are set to their attack state.
Let $L_2$ be a set of truth values generated by applying the Cut Contextualizer to $C$. Then $L_1$ = $L_2$.
\IfRestatedTF{}{{\small (Proof in Appendix)}}
\end{restatable}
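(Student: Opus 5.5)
We compare $L_1$ and $L_2$ leaf by leaf. Both assignments are determined entirely by the leaves' cut states in $C$ together with the fixed attack/defense labels produced by Algorithm~\ref{alg:determineAtkDef}. It therefore suffices to show that, for every leaf $l$ of $T$, the truth value assigned to $l$ in $L_1$ equals the one assigned by the Cut Contextualizer (Algorithm~\ref{alg:cut-context}) applied to $C$. Write $M(l)$ for the merged cut state of $l$ in $C$. By Theorem~\ref{thm::Contribution}, exactly one of \textit{true}/\textit{false} is $l$'s attack state and the other is its defense state. Hence $M(l)$ falls into exactly one of four classes: \textit{unset}, \textit{conflict}, the attack value, or the defense value.

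The first step is to establish the invariant that the domain of $W$ is exactly $\{l : M(l) \in \{\textit{unset}, \text{defense}\}\}$ and that $W(l)=M(l)$ on this domain. For a freshly constructed wall, this is Definition~\ref{def::wallDefinition}. For a wall produced by sieges, we argue by induction on the number of sieges, using Definition~\ref{def::seige} and the merge table (Table~\ref{table::cut-state-merge}). Suppose $W$ corresponds to $D$ and is sieged by a single routine cut $C'$ to give $W'$ corresponding to $D \oplus C'$. We check each possible merge outcome:
\begin{itemize}
\item A leaf that is \textit{unset} in both $D$ and $C'$ stays \textit{unset}; it stays in $W'$ and its state is unchanged.
\item \textit{Unset} merged with defense gives defense, which matches siege rule (2).
\item Defense merged with defense, or with \textit{unset}, gives defense; the leaf is kept.
\item Any merge involving an attack state or \textit{conflict} on either side yields attack or \textit{conflict}. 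If $C'$ carries the attack/conflict, siege rule (1) removes the leaf. If $D$ carries it, the leaf was already absent from $W$ and can never re-enter, since sieges are monotone.
\end{itemize}
This case check is the main obstacle. In particular, we must confirm that a leaf already absent from $W$ cannot become \textit{unset} or defense in the merged cut. This follows because merging \textit{conflict} with anything gives \textit{conflict}, and merging an attack value with anything other than itself or \textit{unset} gives \textit{conflict}.

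With the invariant established, the comparison proceeds in two cases.
\begin{itemize}
\item For $l \notin W$: $M(l)$ is \textit{conflict} or the attack value. The Cut Contextualizer assigns the attack state in both situations (explicitly for \textit{conflict}, and by copying $M(l)$ for an attack value). Criterion (2) of $L_1$ also assigns the attack state, so the two agree.
\item For $l \in W$: $W(l)=M(l)$, and criterion (1) applies the Cut Contextualizer's rule to this same state. An \textit{unset} leaf therefore receives its attack state, and a defense leaf keeps its defense value, exactly as in $L_2$.
\end{itemize}
Hence $L_1$ and $L_2$ agree on every leaf, so $L_1 = L_2$. Since evaluation of a stateful tree is determined by its leaves, the two stateful trees also coincide at the root.
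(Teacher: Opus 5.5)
Your proof is correct and matches the paper's argument: leaves outside the wall are in \textit{conflict} or their attack state, which the Cut Contextualizer maps to the attack state, and leaves inside the wall receive the same Cut Contextualizer value in both $L_1$ and $L_2$. The one addition is your induction showing that walls produced by sieges still satisfy the wall invariant for the merged cut, which the paper takes as given by Definitions~\ref{def::wallDefinition} and~\ref{def::seige}; this is a sound extra check, not a different route.
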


If a cut $C$ does not remove any leaves from a wall after a siege, then it will similarly be ineffective on sieging a smaller wall (one with a subset of the original's leaf mappings). So:

\begin{restatable}[Wall Transitivity]{theorem}{WallTransitivityThm}
\label{thm::siege3}
Consider a single routine cut $C$ and a \textbf{safe} multi-routine cut $M$ with a 0 component for $C$'s routine. Let $M$'s wall be $W_M$, and assume sieging with $C$ removes no leaves. Then any other \textbf{safe} multi-routine cut $M'$ sharing $M$'s non-zero location vector components (0 component for $C$'s routine) is safe after merging with $C$. 
\IfRestatedTF{}{{\small (Proof in Appendix)}}
\end{restatable}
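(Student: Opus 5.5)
The plan is to reduce the statement to Wall Equivalence (Theorem~\ref{thm::siege1.2}) plus monotonicity of sieges. By Theorem~\ref{thm::siege1.2}, safety of any cut is determined entirely by its wall. Leaves inside the wall take the values the Cut Contextualizer assigns, and all other leaves take their attack states. Hence a cut is safe iff the stateful tree built from its wall in this way evaluates to \emph{true} at the root. It therefore suffices to show that merging $C$ into $M'$ yields a wall equal to $W_{M'}$. If that holds, then by Theorem~\ref{thm::siege1.2} the merged cut receives exactly the same leaf truth values as $M'$, and is safe because $M'$ is.

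The first step is to pin down what ``sieging $W_M$ with $C$ removes no leaves'' means at the level of individual leaves. Consider any leaf $l$ mapped by $W_M$. By Definition~\ref{def::seige}, $C$ does not record an attack or conflict state for $l$, so $C(l)$ is either \emph{unset} or the defense state of $l$. There is one subtlety: the siege may still update an \emph{unset} entry to a defense state. I would argue that this case does not arise in a harmful way. If $W_M(l)$ is \emph{unset} and $C(l)$ is a defense state, the Cut Contextualizer switches $l$ from its attack state to its defense state. By Theorem~\ref{thm::Contribution} and the attack/defense assignment of Algorithm~\ref{alg:determineAtkDef}, moving a leaf from attack to defense can only move the root toward satisfaction. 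So I would weaken the target from ``the walls are equal'' to the following: \emph{every leaf receives a value at least as satisfying as in $M'$}. I would then invoke monotonicity: the Safron connectors, after the rewrites of Table~\ref{parse-modifications}, evaluate monotonically in the contribution order. This is exactly the content of Theorem~\ref{thm::Contribution}.

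The second step transfers this from $W_M$ to $W_{M'}$. The key observation is that, by the merge table (Table~\ref{table::cut-state-merge}), $C$'s cut state on each leaf depends only on $C$ itself, not on the cut it is merged with. The only requirement is that $C$'s routine contributes nothing yet to $M'$, which holds because of the 0 component. The leaves that $C$ puts into attack or conflict form a fixed set $A_C$, and the hypothesis says $A_C \cap \mathrm{dom}(W_M)=\varnothing$. I need $A_C\cap \mathrm{dom}(W_{M'})=\varnothing$. This is the step I expect to be the main obstacle. The sentence preceding the theorem suggests using $\mathrm{dom}(W_{M'})\subseteq \mathrm{dom}(W_M)$, i.e., that $W_{M'}$ is a ``smaller wall''. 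I would try to justify this by writing $M'$ as obtained from the common part of $M$ by further sieges, using the fact that walls only shrink monotonically under sieges (Definition~\ref{def::seige} and the remark after it). If the shared nonzero components do not force containment, I would fall back on a leafwise argument. For any $l\in A_C$, merging makes $l$ attack or conflict regardless of its state in $M'$. The question is then whether this can flip $M'$'s root. The intended reading of ``removes no leaves'' is presumably that $C$ touches only leaves already in attack/conflict in the relevant cuts, so I would state and use containment explicitly.

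Finally, I combine the pieces. Every leaf in $W_{M'}$ stays in $W_{M'}$ with a value at least as defensive as before. Every leaf outside $W_{M'}$ is already in its attack state and remains so, since by the footnote on merges an attack/conflict state persists. Theorem~\ref{thm::siege1.2} together with the monotonicity from Theorem~\ref{thm::Contribution} then gives that the root of the merged cut's stateful tree is \emph{true} whenever the root for $M'$ is. So $M'$ merged with $C$ is safe.
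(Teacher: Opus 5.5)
Your proposal is correct and follows essentially the same route as the paper: both obtain $W_{M'} \subseteq W_M$ because $M'$ extends $M$ by further merges, and attack/conflict states are absorbing under Table~\ref{table::cut-state-merge}, so walls only shrink; both then conclude that sieging $W_{M'}$ with $C$ removes no leaves and finish via Wall Equivalence (Theorem~\ref{thm::siege1.2}). Your monotonicity step for leaves that move from \emph{unset} to a defense state tightens the paper's looser claim that the wall ``remains unchanged,'' and your hedged fallback is unnecessary, since containment holds under the intended reading that $M'$ agrees with $M$ on all of $M$'s nonzero components.
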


\begin{restatable}[Wall-Siege Concurrent Routine Safety Runtime]{theorem}{ConSafRuntimeWallThm}
\label{ConSafRuntimeWallThm}
Using walls and sieges, concurrent routine safety runtime is: \(
{\textstyle O(V\cdot L_\mathcal{S} + \binom{|\mathcal{R}|}{L_\mathcal{S}} \cdot |R_{max}|^{L_\mathcal{S}} \cdot N_\mathcal{S})}
\) 
\IfRestatedTF{}{{\small (Proof in Appendix)}}
\end{restatable}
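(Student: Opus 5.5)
The plan is to count the work of the Wall/Siege variant of Algorithm~\ref{alg::concurrent} in two parts. The first is the unavoidable construction and bookkeeping over the lattice of location vectors. The second is the set of ``expensive'' sieges that actually shrink a wall and therefore need a fresh Cut Contextualizer evaluation.

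\textbf{Step 1: lattice traversal.} We traverse the lattice with memoization, exactly as in the baseline analysis of Section~\ref{CoreSystemAnalysis}. Each of the $V$ location vectors is reached by a single siege from an immediate predecessor's wall (Definition~\ref{def::seige}). Because a wall is a partial map on leaves, a siege touches each leaf mapping at most once, so it costs $O(L_\mathcal{S})$ over all rules. This yields the $O(V\cdot L_\mathcal{S})$ term. By Theorem~\ref{thm::siege1.2}, evaluating safety of a cut is equivalent to evaluating the wall with all removed leaves set to their attack states. So a cut's safety verdict depends only on its wall.

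\textbf{Step 2: when a full evaluation is needed.} If a siege removes no leaves and the predecessor was safe, the resulting wall differs only by unset$\to$defense updates, which can only help satisfaction. Theorem~\ref{thm::siege3} then lets us propagate safety to every cut sharing the same non-zero components without re-running the $O(N_\mathcal{S})$ Cut Contextualizer. If the wall becomes empty, we short-circuit. A full evaluation is therefore charged only to sieges that strictly shrink a wall.

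Walls shrink monotonically and start with at most $L_\mathcal{S}$ leaves. Along any chain of merges, at most $L_\mathcal{S}$ shrinking sieges can occur, and each one is attributable to a distinct routine contributing a nonzero component. So the distinct walls requiring evaluation are determined by choosing at most $L_\mathcal{S}$ routines out of $|\mathcal{R}|$, which gives $\binom{|\mathcal{R}|}{L_\mathcal{S}}$ choices. For each chosen routine we also choose a location, which gives at most $|R_{max}|^{L_\mathcal{S}}$ choices. Each such evaluation costs $O(N_\mathcal{S})$, producing the second term.

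\textbf{Main obstacle.} The delicate step is justifying that every location vector whose wall was not shrunk can be safely skipped, i.e.\ that its verdict is inherited. I would argue this by induction on the number of nonzero components of the location vector. The base case is that cuts with one nonzero component are single-routine cuts, which are already known safe. For the inductive step, a non-shrinking siege applied to a safe cut yields a safe cut by Theorem~\ref{thm::siege3}, so its verdict is inherited. Only shrinking sieges spawn new walls, and those are exactly the cuts counted in Step 2. Summing Steps 1 and 2 gives the stated bound.
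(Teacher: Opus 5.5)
Your decomposition matches the paper's: $O(V\cdot L_{\mathcal{S}})$ to obtain every wall by one siege, plus an $O(N_{\mathcal{S}})$ Cut Contextualizer call only where a wall is new, with those cuts counted as location vectors with $L_{\mathcal{S}}$ nonzero components. The gap is in how you justify that count.

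``At most $L_{\mathcal{S}}$ shrinking sieges along any chain'' bounds how often the wall shrinks on the way to a vector. It does not bound the number of nonzero components of the vectors at which a shrink occurs. The non-shrinking components still sit in the location vector, and their locations multiply the count.

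Here is a concrete case. Take the rule $a \lor c$ (both leaves have attack state \emph{false}). Let every routine first make $c$ true and then make $a$ false, so $|R_{max}|=2$, $L_{\mathcal{S}}=2$, and every cut other than the all-zero one is safe. Suppose walls are produced in a fixed order: adding routines by index, or decrementing the last nonzero component as your ``immediate predecessor'' suggests. Then every vector whose last nonzero component is at location $2$, and whose other components lie in $\{0,1\}$, is reached by a siege that removes $a$ and leaves the nonempty wall $\{c\}$. That gives $2^{|\mathcal{R}|}-1$ charged evaluations, against the $O(|\mathcal{R}|^2)$ your count predicts, even though no chain has more than one shrink. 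So ``those are exactly the cuts counted in Step 2'' is false.

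Counting distinct walls instead does not rescue the step as you argue it. A wall is not determined by the routines that shrank it, because non-shrinking sieges still turn unset leaves into defense leaves. Also, your induction transports only \emph{safe} verdicts. A non-shrinking siege on an unsafe parent can make the child safe, and that case is not accounted for.

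What is missing is a per-vector lemma with a vector-dependent choice of which component to drop. This is what the paper's ``each non-zero component \ldots removes exactly 1 leaf (else we were still safe)'' is pointing at.

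Fix $v$ and a leaf $\ell$. Zeroing component $k$ of $v$ changes $\ell$'s wall entry (absent, unset, or defense) only in two cases: $k$ is the unique component of $v$ attacking $\ell$, or $k$ is the only component whose cut is not \emph{unset} on $\ell$. So each leaf is sensitive to at most one component. Hence if $v$ has more than $L_{\mathcal{S}}$ nonzero components, some component is sensitive for no leaf, and zeroing it leaves every wall unchanged. By Wall Equivalence the verdict is then inherited in both directions, which also handles unsafe parents.

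Iterating this, every wall that occurs at all already occurs at a vector with at most $L_{\mathcal{S}}$ nonzero components. Memoize Cut Contextualizer results by wall; hashing costs $O(L_{\mathcal{S}})$ per cut, which falls inside the first term. This gives at most $\sum_{s\le L_{\mathcal{S}}}\binom{|\mathcal{R}|}{s}|R_{max}|^{s}$ evaluations of cost $O(N_{\mathcal{S}})$ each, which is the second term. Strictly, it is this sum, not the single binomial for exactly $L_{\mathcal{S}}$ components.

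Separately, a siege must start from the wall of the vector whose component for the sieging routine is $0$, as the siege definition and Wall Transitivity require. It cannot start from the immediate predecessor. Sieging the wall at location $j_k-1$ of $R_k$ with $R_k$'s cut at $j_k$ merges two cuts of the same routine. It can never restore a leaf that $R_k$ attacked earlier and then re-defended.
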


This  is smaller than  Baseline's runtime from Section~\ref{CoreSystemAnalysis}. %

\begin{figure}[t]
    \centering\small
    \includegraphics[scale=.42]{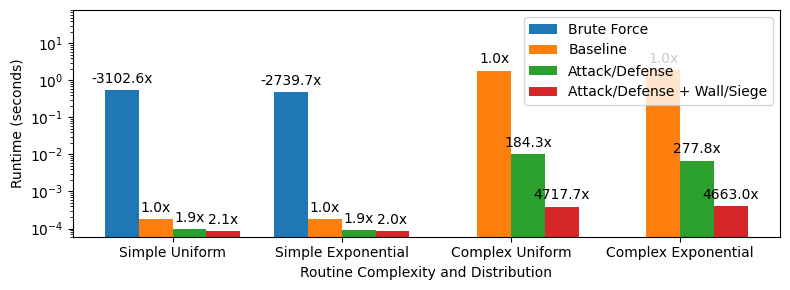}
    \vspace{-.5cm}
    \medskip
    \captionof{figure}{\bf Average runtime comparison for routines with different complexity and distribution.}%
    \label{fig:runtime}
\end{figure}

\section{Evaluation} \label{sec:eval}

We implemented Knox in C++ and integrated it with the popular HomeAssistant framework~\cite{HomeAssistant}. We 
answer four research questions:

\begin{enumerate}[nosep]
\item How fast is Knox at checking a routine+safety clause set? 

\item How does Knox scale with number of %
safety clauses?

\item What is Knox's false positive/negative behavior? %

\item How does Knox perform against TapChecker~\cite{TapChecker}?
\end{enumerate}

Due to the scarcity of real smart space benchmarks, we evaluate Knox under a diversity of scenarios, using synthetic, realistic, and trace-based workloads.

\subsection{Baseline Comparisons}
\vspace{-1mm}

We simulate a medium-sized home with 20 smart devices, 1 hub, and safety clauses touching 3-7  devices. (These scales are typical of today's homes.) We generate routines and safety rules with varying concurrency, length, and distribution. 

Figure~\ref{fig:runtime}  compares  
Baseline Knox (Section~\ref{subsec::Mlirad}),
Knox + Attack/Defense Optimization (Section~\ref{subsec::AttDef}), vs. Knox + Attack/Defense + Wall/Siege (Section~\ref{subsec::WallSiege}). We show the runtime to check the entire routine+safety clause set for all safety violations,  under varying routine workloads:  %
 (a) {\it Simple} workloads have 3 concurrent routines each with 3 commands, (b) {\it Complex} workloads have 7 concurrent routines each  with 7 commands. In each, routines pick devices based on either (i) exponential distribution (simulates more overlap among routines), or (ii) uniform distribution (less overlap). All data points use an average of 1000 trials.

We observe that Baseline Knox (1) significantly reduces runtime over the default brute-force approach,  %
and (2) achieves significant speedup, up to 3102$\times$ on simple routines (for complex routines, Brute Force took too long to measure). Further, (3) our two optimizations---Attack/Defense \& Wall/Siege---reduce runtime over Baseline Knox on complex routines, up to 277$\times$ and 4717$\times$ respectively. %

\begin{table}[t]
    \small \centering
    \vspace{0.2cm}
    \begin{tabular}{l|cccc}
    \toprule
    \textbf{Run Type} & $2{\times}2$ Unif & $2{\times}2$ Expo & $3{\times}3$ Unif & $3{\times}3$ Expo \\
    \midrule
    \textbf{Accuracy} & 0.989    & 0.988    & 0.961    & 0.963    \\
    \bottomrule
    \end{tabular}
    \vspace{-0.2cm}
    \caption{\small\bf Accuracy Rates for Various Routine Sets: \textnormal{
        Unsafe cut reports from Knox are compared against brute-force checks. False positive rates stay below 4\%.
    }} 
    \label{tab:tp_rates}
\end{table}

\subsection{Scalability}

\begin{figure}[t]
\centering \small
    \centering\small
    \includegraphics[scale=.47]{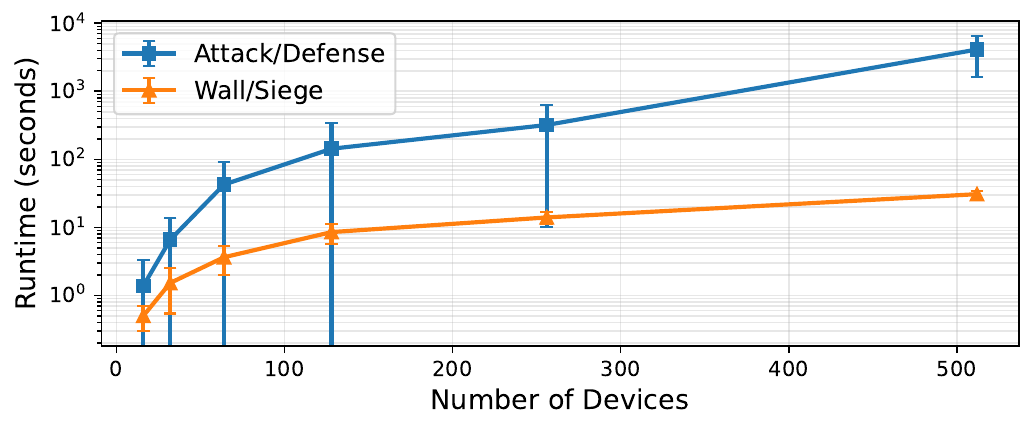}
    \vspace{-5mm}
    \medskip
    \captionof{figure}{\small \bf Device Scalability: 
    \textnormal{Average runtime vs. number of devices.}}%
    \label{fig:safetyrules}
\end{figure}

Table~\ref{tab:tp_rates} shows accuracy (Section~\ref{reducingFalsePositives}) with \{2,3\} concurrent routines $\times$ \{2,3\} commands and a randomly generated safety tree of 5 device states. Accuracy =  (1 - FPR), where the false positive rate (FPR) is the ratio of nonexistent unsafe cut reports from Knox to the total number of unsafe cut reports, measured by matching Knox's output against corresponding brute-force results. We observe (i) high accuracy and (ii) insignificant accuracy drop for complex routines, with no missed violations (false negatives = 0).

Figure \ref{fig:safetyrules} evaluates scalability with respect to devices.  %
We also scaled the size of the safety tree from 8 to 256 leaf nodes, (correlated with the  device count), along with routine sizes from $5\times5$ to $8\times8$. Each data point is averaged over 100 trials. Wall/Siege shows a consistent advantage over Attack/Defense with a $2.76\times$--$133.4\times$ speedup when scaling to hundreds of smart devices (Baseline Knox was prohibitively slow to test).

\subsection{Home Assistant Room Simulation}

\begin{figure}%
\centering \small
    \vspace{0.4cm}
    \centering\small
    \includegraphics[scale=.6]{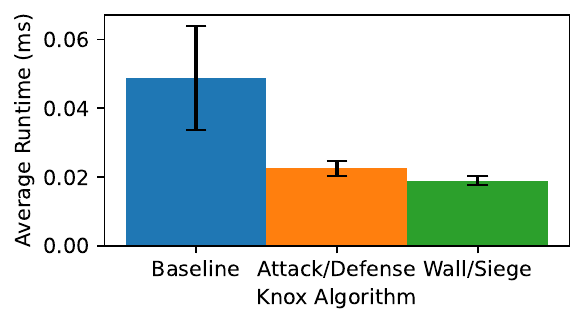}
    \vspace{-0.4cm}
    \captionof{figure}{\small \bf Home Assistant Experiments: \textnormal{Average runtime comparison for realistic routines in a room simulation.}}%
    \label{fig:roomsim}
\end{figure}

We use the HomeAssistant integration to generate a smart room %
containing 7 devices, 5 routines (2-5 commands/routine), and 2 safety rules with a total of 4 device states involving range queries %
($\lesseqgtr$ comparisons). %
Figure~\ref{fig:roomsim} shows that (i) 
Attack/Defense and Wall/Siege both outperform Baseline by around $2.2\times$, and (ii) Wall/Siege is slightly faster than Attack/Defense. (Brute Force was  prohibitively slow to test.)

\begin{table}[h]%
    \vspace{0.2cm}
    \centering\small
    \begin{tabular}{lllll}\toprule
        &\multicolumn{2}{c}{\textbf{50 Routines}}&\multicolumn{2}{c}{\textbf{500 Routines}}
        \\\cmidrule(r){2-3}\cmidrule(r){4-5}   
        &LR&BR&LR&BR\\\midrule
        TapChecker & .66 s & .65 s 
                & 69.365 s & 69.372 s
                \\
        Knox   & .029 s & .027 s 
                & .238 s & .239 s 
                \\
        \midrule
        Speedup & 22.76$\times$ & 24.07$\times$ 
                & 291.45$\times$ & 290.26$\times$ 
        \\\bottomrule
    \end{tabular}
    \vspace{-0.3cm}
    \caption{\small {\bf Knox and TapChecker~\cite{TapChecker} Performance and Routine Scalability}. \textnormal{Living Room (LR), Bedroom (BR) }}
    \label{tab:knoxtapunified}
\end{table} 

\vspace{-10pt}
\subsection{Comparison vs. TapChecker}

To compare Knox with TapChecker~\cite{TapChecker} in a fair way, we borrow that paper's workload of 2011 routines in total. Table~\ref{tab:knoxtapunified} evaluates scalability with respect to routines. It shows that Knox runtime is 22.76$\times$ to 291.45$\times$ faster than TapChecker. In terms of accuracy, TapChecker only considers pairwise (2-way) conflicts among routines. To compare fairly against Knox (which considers conflicts from any number of concurrent  routines), we {\it extended TapChecker's original implementation to detect \emph{3-way conflicts}}.  %
We use 17 routines, and then inject 12 crafted rules that can only be violated by 2-way or 3-way concurrent routines.  
Table \ref{tab:fn_rates}  %
shows that even the augmented TapChecker can have false negative rates (missed violations) up to 12.2\%, but Knox has none. %

\begin{table}[h]%
    \vspace{0.3cm}
    \centering\small
    \begin{tabular}{lccc}
        \toprule
        Three-way Rules Added & 1 & 3 & 4  \\
        \midrule
        TC-triple & 3.44\% & 8.89\% & 12.2\% \\
        Knox & 0\% & 0\% & 0\% \\
        \bottomrule
    \end{tabular}
    \vspace{-0.2cm}
    \caption{\small\bf False Negative (FN) Rates for Varying Numbers of \textit{Three-way Safety Rules}: \textnormal{Conflicts missed. %
    Knox never misses conflicts. 
    }}
    \label{tab:fn_rates}
\end{table}

\section{Summary}
We have presented the first algorithms that statically check the satisfaction of safety clauses in a smart space containing an arbitrary number of devices running an arbitrary number of concurrent routines, along with an expressive grammar for specifying such clauses. Knox checks safety properties {\it quickly} (without brute-force enumeration) and {\it accurately} (no missed violations, low false positive rates).
Our work opens up new future directions including
tackling dynamic versions of this problem, characterizing errors to further lower false positives, and considering additional error types, such as security violations.

\noindent {\bf Acknowledgments:} This work was supported in part by grants from NSF (CNS 1908888, CNS 2504595)  and the IBM-IL Discovery Accelerator Institute (IIDAI).

\printbibliography

@inproceedings {Ahsan01HotEdge,
author = {Shegufta Bakht Ahsan and Rui Yang and Shadi Abdollahian Noghabi and Indranil Gupta},
title = {Home, {SafeHome}: Ensuring a Safe and Reliable Home Using the Edge},
booktitle = {HotEdge 19},
year = {2019}
}

@ARTICLE{TapChecker,
  author={Chen, Liangyu and Wang, Chen and Chen, Cheng and Huang, Caidie and Chen, Xiaohong and Zhang, Min},
  journal={IEEE IoT Journal}, 
  title={{TapChecker}: A Lightweight {SMT}-Based Conflict Analysis for Trigger-Action Programming},
  year={2024},
  volume={11},
  number={12},
  pages={21411-21426},
  doi={10.1109/JIOT.2024.3374556}}

@INPROCEEDINGS{SmartDevicesStupid,
  author={He, Weijia and Martinez, Jesse and Padhi, Roshni and Zhang, Lefan and Ur, Blase},
  booktitle={IEEE SPW}, 
  title={When Smart Devices Are Stupid: Negative Experiences Using Home Smart Devices}, 
  year={2019},
  volume={},
  number={},
  pages={150-155},
  doi={10.1109/SPW.2019.00036}}

@inproceedings{EUDebug,
author = {Corno, Fulvio and De Russis, Luigi and Monge Roffarello, Alberto},
title = {Empowering End Users in Debugging Trigger-Action Rules},
year = {2019},
isbn = {9781450359702},
booktitle = {Proc. CHI},
pages = {1–13},
numpages = {13},
}

@inproceedings{BugsinTAP,
author = {Brackenbury, Will and Deora, Abhimanyu and Ritchey, Jillian and Vallee, Jason and He, Weijia and Wang, Guan and Littman, Michael L. and Ur, Blase},
title = {How Users Interpret Bugs in Trigger-Action Programming},
year = {2019},
isbn = {9781450359702},
doi = {10.1145/3290605.3300782},
booktitle = {Proc. CHI '19},
pages = {1–12},
numpages = {12}
}

@inproceedings{Ahsan02Eurosys,
author = {Ahsan, Shegufta B. and Yang, Rui and Noghabi, Shadi A. and Gupta, Indranil},
title = {Home, {SafeHome}: Smart Home Reliability with Visibility and Atomicity},
year = {2021},
isbn = {9781450383349},
doi = {10.1145/3447786.3456261},
booktitle = {Proc. EuroSys '21},
pages = {590–605},
numpages = {16},
location = {Online Event, United Kingdom}
}

@inproceedings {Hance,
author = {Travis Hance and Marijn Heule and Ruben Martins and Bryan Parno},
title = {Finding Invariants of Distributed Systems: It{\textquoteright}s a Small (Enough) World After All},
booktitle = {NSDI '21},
year = {2021},
isbn = {978-1-939133-21-2},
pages = {115--131},
month = apr,
}

@article{Hawblitzel,
author = {Hawblitzel, Chris and Howell, Jon and Kapritsos, Manos and Lorch, Jacob R. and Parno, Bryan and Roberts, Michael L. and Setty, Srinath and Zill, Brian},
title = {{IronFleet}: Proving Safety and Liveness of Practical Distributed Systems},
year = {2017},
volume = {60},
number = {7},
issn = {0001-0782},
doi = {10.1145/3068608},
journal = {CACM},
pages = {83–92},
numpages = {10}
}

@inproceedings{Liang,
author = {Liang, Chieh-Jan Mike and Karlsson, B\"{o}rje F. and Lane, Nicholas D. and Zhao, Feng and Zhang, Junbei and Pan, Zheyi and Li, Zhao and Yu, Yong},
title = {{SIFT}: Building an Internet of Safe Things},
year = {2015},
isbn = {9781450334754},
doi = {10.1145/2737095.2737115},
booktitle = {Proc. IPSN '15},
pages = {298–309},
numpages = {12}
}

@inproceedings{Zhou,
author = {Zhou, Qian and Ye, Fan},
title = {{APEX}: Automatic Precondition Execution with Isolation and Atomicity in Internet-of-Things},
year = {2019},
isbn = {9781450362832},
doi = {10.1145/3302505.3310066},
booktitle = {Proc. IoTDI '19},
pages = {25–36},
numpages = {12}
}

@inproceedings{Davidoff,
author = {Davidoff, Scott and Lee, Min Kyung and Yiu, Charles and Zimmerman, John and Dey, Anind K.},
title = {Principles of Smart Home Control},
year = {2006},
isbn = {9783540396345},
booktitle = {Proc. Ubicomp},
pages = {19–34},
numpages = {16}
}

@inproceedings{iotacalc,
author = {Newcomb, Julie L. and Chandra, Satish and Jeannin, Jean-Baptiste and Schlesinger, Cole and Sridharan, Manu},
title = {{IOTA}: a calculus for internet of things automation},
year = {2017},
isbn = {9781450355308},
doi = {10.1145/3133850.3133860},
booktitle = {Proc. SIGPLAN Onward! '17},
pages = {119–133},
numpages = {15}
}

@misc{IotA24,
author = "IoT-Analytics",
year = "2024",
title = "{IoT Startup Landscape 2024: 7 notable insights}",
url = "https://iot-analytics.com/iot-startup-landscape/",
month = jul,
lastaccessed = "May 2026",
}

@misc{IotA25,
author = "IoT-Analytics",
year = "2025",
title = "{State of IoT 2025: Number of connected IoT devices growing 14\% to 21.1 billion globally}",
url = "https://iot-analytics.com/number-connected-iot-devices/",
month = oct,
lastaccessed = "May 2026",
}

@misc{GVR26,
author = "Grand-View-Research",
year = "2026",
title = "Smart Home Market",
url = "https://www.grandviewresearch.com/industry-analysis/smart-homes-industry",
lastaccessed = "May 2026",
}

@inproceedings{DBLP:conf/tacas/BarbosaBBKLMMMN22,
  author    = {Haniel Barbosa and
               Clark W. Barrett and
               Martin Brain and
               Gereon Kremer and
               Hanna Lachnitt and
               Makai Mann and
               Abdalrhman Mohamed and
               Mudathir Mohamed and
               Aina Niemetz and
               Andres N{\"{o}}tzli and
               Alex Ozdemir and
               Mathias Preiner and
               Andrew Reynolds and
               Ying Sheng and
               Cesare Tinelli and
               Yoni Zohar},
  title     = {{cvc5}: {A} Versatile and Industrial-Strength {SMT} Solver},
  booktitle = {TACAS '22},
  series    = {Lecture Notes in Computer Science},
  volume    = {13243},
  pages     = {415--442},
  year      = {2022},
  doi       = {10.1007/978-3-030-99524-9_24},
  bibsource = {dblp computer science bibliography, https://dblp.org},
}

@misc{GoogleHome,
    author = "Google",
    title = "Google {Home}",
    url = "https://home.google.com/welcome/",
    year = "2024",
    month = jan,
}

@misc{AmazonAlexa,
    author = "Amazon",
    title = "Amazon {Alexa}",
    url = "https://developer.amazon.com/en-US/alexa",
    year = "2024",
    month = jan,
}

@misc{AppleHome,
    author = "Apple",
    title = "Apple {Home}",
    url = "https://www.apple.com/home-app/",
    year = "2024",
    month = jan,
}

@misc{SamsungSmartThings,
    author = "Samsung",
    title = "Samsung {SmartThings}",
    url = "https://www.smartthings.com/",
    year = "2024",
    month = jan,
}

@misc{HomeAssistant,
    author = "HomeAssistant",
    title = "{Home Assistant} Developer Docs",
    url = "https://developers.home-assistant.io/",
    year = "2024",
    month = oct,
}

@misc{IFTTT,
    author = "IFTTT",
    title = "{IFTTT}",
    url = "https://ifttt.com/",
    year = "2024",
    month = jan,
}

@misc{GoogleSafetyCritical,
    author = "Google",
    title = "Create and manage Routines for {Google} {Home} automations",
    url = "https://support.google.com/googlenest/answer/7029585",
    year = "2024",
    month = jan,
}

@misc{InternetOfShit,
    title = "{Internet of Shit}",
    author= "@internetofshit",
    url = "https://x.com/internetofshit",
    year = "2024",
    month = feb,
}

@misc{TermsSmartThings,
    author = "Samsung",
    title = "Samsung Service Terms and Conditions - {SmartThings} Services Supplement",
    url = "https://v3.account.samsung.com/policies/specials/smartthings.html",
    year = "2024",
    month = feb,
}

@misc{HomekitForum1,
    author = "MEGATOMI",
    title = "{HomeKit} Automations are broken in {iOS} 16",
    url = "https://discussions.apple.com/thread/254199872",
    year = "2022",
    month = sep,
}

@misc{HomekitForum2,
    author = "fabiom91",
    title = "{HomeKit} unreliable: unresponsive accessories",
    url = "https://discussions.apple.com/thread/253711255",
    year = "2022",
    month = mar,
}

@misc{HomekitForum3,
    author = "Mett03",
    title = "{HomeKit} automation not working after {iOS} 17",
    url = "https://discussions.apple.com/thread/255155940",
    year = "2023",
    month = sep,
}

@misc{HomekitForum4,
    author = "tomg15",
    title = "{HomeKit} Automations not working (again)",
    url = "https://discussions.apple.com/thread/254390066",
    year = "2022",
    month = nov,
}

@misc{SmartThingsForum1,
    author = "LowRange",
    title = "A warning not to rely on {SmartThings}",
    url = "https://community.smartthings.com/t/a-warning-not-to-rely-on-smartthings/218296",
    year = "2021",
    month = jan,
}

@article{DURAN2020100497,
title = {Programming and symbolic computation in {Maude}},
journal = {Journal of Logical and Algebraic Methods in Programming},
volume = {110},
pages = {100497},
year = {2020},
issn = {2352-2208},
doi = {10.1016/j.jlamp.2019.100497},
url = {https://www.sciencedirect.com/science/article/pii/S2352220818301135},
author = {Francisco Durán and Steven Eker and Santiago Escobar and Narciso Martí-Oliet and José Meseguer and Rubén Rubio and Carolyn Talcott}
}

\clearpage
\appendix
\knoxrestatingtrue
\label{appendix}

\section{NP Hardness: Single Routine Safety}\label{app:nphardsingle}

\NPHardSingle*

\begin{proof}
We prove for just one safety rule. Let \emph{SingleRoutineSafety} be an algorithm to solve the Single Routine Safety problem. Algorithm~\ref{algo1} shows the reduction from SAT:

\begin{algorithm}
\caption{\textbf{SAT Reduction to \emph{SingleRoutineSafety}}}
\label{algo1}
{\small
\KwIn{Boolean expression $B$}
\KwOut{``YES'' or ``NO''}
Define boolean $a$:
$a = $ Device $D$ in state $D_s$\;
Construct Safety Rule $S$ from expression $B$ and $a$:
$S = a \land \lnot (B)$\;
Construct routine $R$:
$R = \{\text{Set }D \text{ to } D_s\}$\;
Output ``YES'' iff \emph{SingleRoutineSafety}$(R, S)$ outputs ``NO''\;
Otherwise output ``NO''\;
}
\end{algorithm}

\emph{SingleRoutineSafety}$(R, S)$ returns ``YES'' iff there is \textit{no} setting that makes $\lnot (B)$ false, i.e.,  iff $B$ is not satisfiable. Due to this reduction, our problem is NP-hard. \end{proof}
\section{NP Hardness: Safe Initial States}\label{app:np-init}

\NPHardSafeInitialThm*

\begin{proof}
We again limit ourselves to one safety rule.
Let \emph{SingleRoutineSafety} be an algorithm to solve the Single Routine Safety problem, limited to safe initial states. Algorithm~\ref{algo:satreductionsafeinit} shows the reduction from SAT:

\begin{algorithm}
\caption{\bf SAT Reduction to \emph{SingleRoutineSafety} with safe initial states}
\label{algo:satreductionsafeinit}
\KwIn{Boolean expression $B$}
\KwOut{``YES'' or ``NO''}
Define boolean $a$:
$a = $ Device $D$ in state $D_s$\;
Construct Safety Rule $S$ from expression $B$ and $a$:
$S = a \lor \lnot (B)$\;
Construct routine $R$:
$R = \{\text{Set Device } D \text{ to state } D_s', D_s \neq D_s'\}$\;
Output ``YES'' iff \emph{SingleRoutineSafety}$(R, S)$ outputs ``NO''\;
Otherwise output ``NO''\;
\end{algorithm}

If $B$ is not satisfiable, then $S$ is trivially safe. However, if $B$ \textit{is} satisfiable, we can choose such a setting, with $D$ in state $D_s$ ($a=\text{True}$) as our safe starting state. But entering state $D_s'$ (violating $a$) will violate safety! Therefore, \emph{SingleRoutineSafety}$(R, S)$ returns ``YES'' iff $B$ is not satisfiable, and the problem is NP-hard.
\end{proof}
\section{One-To-One Contribution}\label{app:one-to-one-contribution}

\onetoonecontribution*

\begin{proof}

By induction.

\noindent\textbf{Base Case}: $n$ is the root, and trivially satisfies the theorem by Definition \ref{defn::RootContribution}.

\noindent\textbf{Inductive Case}: Assume $n$'s parent's setting can contribute to {\it only} rule satisfaction or rule violation. It is sufficient to show that $n$'s setting can only contribute to one truth value for its parent, which then contributes to satisfaction or violation. We have five possibilities:

\squishlist
\item\textbf{$n$'s parent is a NOT node}: The claim is trivially satisfied.

\item\textbf{$n$'s parent is an OR node}: If $ST(n) = \textit{false}$, then setting $ST(n) = \textit{true}$ trivially satisfies the parent, no matter  the other sibling values of $n$.

Likewise, if all other siblings are  $\textit{false}$, then setting $ST(n) = \textit{false}$ violates the parent, and there exists no setting of siblings such that setting $ST(n)=\textit{false}$  changes parent's value from $\textit{false}$ to $\textit{true}$.

\item\textbf{$n$'s parent is an AND node}: If all siblings are set to $\textit{true}$, then setting $ST(n)$ to $\textit{true}$ satisfies the parent, and there exists no setting of other siblings such that setting $ST(n)=\textit{true}$  changes the parent's value from $\textit{true}$ to $\textit{false}$.

Likewise, if $ST(n) = \textit{true}$, then setting $ST(n)=\textit{false}$ trivially violates the parent, no matter $n$'s other sibling values.

\item\textbf{$n$'s parent is AT MOST $k$}: If $k$ siblings of $n$ are set to $\textit{true}$, then setting $ST(n)$ to $\textit{true}$ violates the parent. There exists no setting of $n$'s siblings such that setting $ST(n) = \textit{true}$ changes the parent's value from $\textit{false}$ to $\textit{true}$.

Likewise, if $k$ siblings and $n$ are $\textit{true}$, then setting $ST(n)$ to $\textit{false}$ satisfies the parent. There is no setting of $n$'s siblings such that setting $n$ to $\textit{false}$ changes the parent's value from $\textit{true}$ to $\textit{false}$.

\item\textbf{$n$'s parent is AT LEAST $k$}: If $k - 1$ siblings and $n$ are set to $\textit{true}$, then setting $ST(n)$ to $\textit{false}$ violates the parent. There is no setting to $n$'s siblings such that setting $ST(n)$ to $\textit{false}$ changes the parent's value from $\textit{false}$ to $\textit{true}$.

Likewise, if $k - 1$ siblings are set to $\textit{true}$, then setting $n$ to $\textit{true}$ satisfies the parent. There is no setting to $n$'s siblings such that setting $n$ to $\textit{true}$ changes the parent's value from $\textit{true}$ to $\textit{false}$.
\squishend

Hence, each node's setting can contribute only a single truth value to its parent, and by extension to exactly one of rule satisfaction or violation. Additionally, this result shows that a setting contributing to satisfaction is the opposite of the corresponding setting for violation.
 
By induction, these properties hold for the entire tree. 
\end{proof}
\section{Correctness of Routine Safety}\label{app:correctness}

\correctness*

\begin{proof}
We define $C1$'s location vector as $<m_1, ..., m_{|\mathcal{R}|}>$ and $C2$'s location vector as $<i_1, ..., i_{|\mathcal{R}|}>$, and proceed with a proof by contradiction.

Assume a safety rule $S$ is violated at some arbitrary location vector $t = <j_1, ..., j_{|\mathcal{R}|}>$, with $m_k \leq j_k \leq i_k$ for all $1 \leq k \leq |\mathcal{R}|$. Let $C'$ be the result of merging cuts taken from each routine $R_k$ at location $j_k$, assumed to be safe. 
Let the ordered leaf states for $C'$ and $t$ be $L_{C'}$ and $L_t$, respectively.
Let $\Delta s$ be the set of leaves in $S$'s tree, such that for every leaf $l \in \Delta s$, $l$'s state in $L_t$ and $l$'s state in $L_{C'}$ differ. Note  $\Delta s \neq \varnothing$ iff $L_{C'}$ and $L_t$ disagree on at least one leaf state.

For an arbitrary $l \in \Delta s$, consider the cause of the disagreement. By definition, cuts arising from multiple routines preserve the exact state recorded from constituent routines in a state list, with two notable exceptions. A merge can consist of routines that 1. never set that device state, in which case the cut records \textit{unset},  or 2. disagree on a device state, in which case the cut records \textit{conflict}. In both cases,  Cut Contextualizer assumes an attack state for the leaf in question.

$L_t$, on the other hand, is exactly a record of leaf states taken at time $t$. So every $l$ must have the property of being an attack in $L_{C'}$ and a defense in $L_t$.
Next we construct $L_t$ from $L_{C'}$. From $\Delta s$'s definition, this transformation is merely changing the state of every $l \in \Delta s$. If every $l$ was an attack in $L_{C'}$, then it must be a defense in $L_t$. Since $L_{C'}$ satisfies $S$ by assumption, $L_t$ must also satisfy $S$. Thus, such a location vector $t$ (violating  $S$) cannot exist.

\end{proof}
\section{Single-Routine False Positive Avoidance}\label{app:singleroutinefpavoidance}

\singleroutinefp*

\begin{proof}
If no false positives exist, then an unsafe single routine cut $C$ exists if and only if a point of execution exists where the routine violates safety for some initial state.
Theorem~\ref{thm::Merging}'s result implies the contrapositive:
if there is an unsafe interleaving, there must be an unsafe cut.
Without loss of generality we only consider devices present in $T$, and consider $C$ as a subroutine executing $C$'s location number of commands. Note that when considering a single routine, a cut state of \textit{conflict} is not possible; only leaves that are \textit{unset} are not explicitly known.
Further, since there is at most one leaf for each device in $T$, there are no contradictory values (Section~\ref{sec:attackdefense}).
We construct initial home state $h$: for each device in $T$, we set it to its attack state. Since \textit{unset} leaves are by definition not modified by the subroutine, these leaves must still be in their attack state. Thus, the home state after executing the subroutine is \textit{exactly} the state that the Cut Contextualizer constructs. Since stateful trees are direct evaluations of safety, if there is an unsafe cut, there is an unsafe interleaving.
\end{proof}
\section{Multi-Routine False Positive Avoidance}\label{app:multiroutinefpavoidance}

\multiroutinefp*

\begin{proof}
We follow the thinking of Theorem~\ref{SingleRoutineFalsePositives} for multi-routine cuts.
For the first condition pair, since there is at most one leaf per device, every command that acts on that device would set all relevant leaves to their attack or defense states, so no commands are ambiguous. For leaves in \textit{conflict}, when choosing an interleaving, order their respective commands so that all routines execute their defense commands first, and then all routines execute their attack commands. We can do this since all routines execute attacks \textit{after} defenses for conflict leaves (by assumption). 
 
For condition (B), assuming a single \textit{conflict} leaf $l$, there must be at least one routine that set it to its attack state and at least one other that set it to its defense state. Without loss of generality, let routine $R_1$ be one of the former and routine $R_2$ be one of the latter. An unsafe interleaving will run all routines other than $R_1$ and $R_2$ in any arbitrary order. Then, when running $R_1$ and $R_2$ afterwards, execute $R_2$'s defense command first, and $R_1$'s attack command second.
 
In both cases, since all other devices are explicitly known by $C$'s location vector, they do not need to be ordered. All conflict leaves in the selected interleaving are by construction in their attack state. 
Thus, the home state after executing this interleaving has \textit{exactly} the state that the Cut Contextualizer constructs. Since stateful trees are direct evaluations of safety, there is an unsafe cut if and only if there is an unsafe interleaving. 
\end{proof}
\section{Attack/Defense from a Safe Start}\label{app:safestart}

\SafeStart*

\begin{proof}
To use Theorem~\ref{thm::AttackDefense} we need the immediate predecessor to be safe. There are two cases--- either it is the cut before all routine commands (assumed safe), or it is preceded by an attack or ambiguous command in which case it is relevant and checked for safety.
\end{proof}
\section{Attack/Defense Runtime Analysis}
\label{app:AttackDefenseAnalysis}

\AttackDefenseRuntimeThm*

\begin{proof}
Command Classification requires every unique command across all routines to be categorized against a rule. The classification is remade for every rule. Based on Section \ref{CoreSystemAnalysis}'s variables, if the worst case total unique command count is $I = \Sigma_{R_i \in \mathcal{R}} |R_i|$ and  total  leaves across all safety rules is $L_\mathcal{S}$, then Command Classification takes $O(I \cdot L_\mathcal{S})$ time.

The total number of multi-routine cuts to be checked for safety depends on the product of number of single routine cuts from each routine, and  total routine count. Theorem \ref{thm::AttackDefense} can reduce the first term. %
A worst-case scenario is a rule set $\mathcal{S}$ and a routine set $\mathcal{R}$ where every command in the routine set is classified as ambiguous for every safety rule in $\mathcal{S}$. Each multi-routine cut  in Section \ref{subsec::Mlirad} must be checked for safety, requiring %
$O(N_\mathcal{S} + K \cdot L_\mathcal{S} + |R_{max}|^{|\mathcal{R}|} \cdot (L_\mathcal{S} + N_\mathcal{S}) + I \cdot L_\mathcal{S})$.

This can be reduced if $\mathcal{S}$ and $\mathcal{R}$ are restricted such that no commands in $\mathcal{R}$ are classified as ambiguous across all rules in $\mathcal{S}$. Routines should start with an attack command and then \emph{alternate} between defenses and attacks — the only irrelevant cuts are those between a defense command and the following attack command (or after the last command if a defense). 
Thus total locations needed for single routine checking, $K$, can be reduced to a $K_{ad}$. Since we halve the number of cuts to be checked and ignore the starting cut, $K_{ad} = \sum_{R_i\in\mathcal{R}} \frac{|R_i|}{2}$. The same reasoning means that the number of location vectors as multi-routine cuts  that are needed to ensure concurrent routine safety are also reduced: $\left(\frac{|R_{max}|}{2}\right)^{|\mathcal{R}|}$. Therefore, the adjusted worst case runtime is $O(N_\mathcal{S} + K_{ad} \cdot L_\mathcal{S} + K_{ad} \cdot N_{\mathcal{S}} + \left(\frac{|R_{max}|}{2}\right)^{|\mathcal{R}|} \cdot (L_\mathcal{S} + N_\mathcal{S})$).
\end{proof}

\section{Wall Equivalence}
\label{app:walleq}

\WallEqThm*

\begin{proof}
Consider leaves {\it outside} the wall. By definition, these must either be  \textit{conflict} cut state or  attack state, both of which our Cut Contextualizer sets to attack state. We assign leaves \textit{inside} the wall (by definition) the same state in $L_1$ \& $L_2$. Hence $\forall$ leaves: $L_1$ = $L_2$.
\end{proof}
\section{Wall Transitivity}
\label{app:WallTransitivityApp}

\WallTransitivityThm*

\begin{proof}
Since Table~\ref{table::cut-state-merge}'s operations are commutative, let $M'$ siege $W_M$. Then, $W_{M'} \subseteq W_M$. $M'$ is safe by assumption, sieging $W_M$ with $C$ removes no leaves, and $W_{M'} \subseteq W_M$, so sieging $W_{M'}$ with $C$ also removes no leaves.

By Theorem~\ref{thm::siege1.2}, the leaves both inside and outside the wall take the states that the Cut Contextualizer assumes. Therefore, if a \emph{safe} multi routine cut's wall remains unchanged after merging a new single routine cut, the result is still safe.
Since $M$ and $M'$ are each safe, merging $C$ is also safe as the wall remains unchanged. 
\end{proof}

\section{Wall/Siege Runtime Analysis}
\label{app:WallSiegeRuntimeAnalysis}

\ConSafRuntimeWallThm*

\begin{proof}
Safety evaluation is needed only when a leaf is removed from the wall. In the worst case, each non-zero component of a multi-routine cut location vector removes exactly 1 leaf when comparing walls (else we were still safe). Once a wall is empty, further cuts are trivially unsafe.
Constructing all multi-routine cuts takes $O(V\cdot L_\mathcal{S})$ time (Section~\ref{CoreSystemAnalysis}), where $V=\prod_{R_i\in\mathcal{R}} (|R_i| + 1)$. %
The complexity comes from number of location vectors with exactly $L_\mathcal{S}$ non-zero components:  $\binom{|\mathcal{R}|}{L_\mathcal{S}} \cdot |R_{max}|^{L_\mathcal{S}}$, where $R_{max}$ has the maximum commands of any routine. This gives a runtime upper bound: 
$$
{\textstyle O(V\cdot L_\mathcal{S} + \binom{|\mathcal{R}|}{L_\mathcal{S}} \cdot |R_{max}|^{L_\mathcal{S}} \cdot N_\mathcal{S})}
$$
\end{proof}

\end{document}